\newtheorem{lemma}{Lemma}
\newtheorem{theorem}{Theorem}
\newtheorem{definition}{Definition}
\newtheorem{remark}{Remark}
\newtheorem{problem}{Problem}
\newtheorem{assumption}{Assumption}
\newtheorem{corollary}{Corollary}
\newcommand{\rr}{\mathbb{R}}
\newcommand{\cc}{\mathbb{C}}
\begin{document}

%\articletype{ARTICLE TEMPLATE}

% \bibliographystyle{unsrt}
%\baselinestretch
% \begin{frontmatter}
%\title{A parameter-dependent LMI approach for cooperative synchronization of agents with delays}
\title{Robust cooperative synchronization of homogeneous agents with delays on directed communication graphs} % -based synchronizing region, A parameter-dependent LMI approach
%\author{Baran Alikoc, Kristian Hengster-Movric and Michael Sebek}
%\ead{Baran.Alikoc@cvut.cz, Kristian.Hengster.Movric@fel.cvut.cz, Michael.Sebek@cvut.cz}
%\address{Czech Institute of Informatics, Robotics and Cybernetics, Czech Technical University in Prague, Czech Republic\\
%Department of Control Engineering, Czech Technical University in Prague, Czech Republic
%}

%\articletype{ARTICLE TEMPLATE}

\author{
\name{Baran Alikoc\textsuperscript{a}\thanks{Corresponding Author: B. Alikoc, e-mail: baran.alikoc@cvut.cz} and Kristian Hengster-Movric\textsuperscript{b}}
\affil{\textsuperscript{a}Czech Institute of Informatics, Robotics, and Cybernetics, Czech Technical University in Prague, 160 00 Prague, Czech Republic \\ \textsuperscript{b}Department of Control Engineering, Faculty of Electrical Engineering, Czech, Czech Technical University in Prague, 166 36 Prague, Czech Republic}
}

\maketitle

\begin{abstract}
This study deals with analysis and control of cooperative synchronization for identical agents interacting on a directed graph topology. The agents are considered to have general continuous linear time-invariant dynamics with homogeneous communication and/or control delays. An LMI approach based on a Lyapunov-Krasovskii functional is proposed, together with the synchronizing region concept, which decouples the single-agent dynamics from the detailed graph topology. Moreover, the conventional notion of synchronizing region is here extended by an LMI relaxation utilizing quasi-convex characteristic of the problem. This leads to less conservative results for the region of graph matrix eigenvalues in the complex domain, where the synchronization is guaranteed. The proposed method to calculate the allowable delay bound for synchronization is also less conservative as compared to the results from the literature. % Additionally, the use of LMIs brings computational efficiency and ease. 
Furthermore, two designs for distributed state-feedback control are suggested. The precise delay value and the detailed graph topology need not be known for their application; it suffices only to know the upper bound on the delay and the approximate region where the Laplacian eigenvalues lie. Specific improvements over the results existing in the literature are demonstrated by a numerical example, which validates the proposed approaches.
\end{abstract}

\begin{keywords}
multi-agent systems, synchronization, cooperative control, time-delay, robustness, Lyapunov-Krasovskii functional \end{keywords}
% \end{frontmatter}
%%%%%%%%%%%%%%%
\section{Introduction}
The consensus/synchronization problem for multi-agent systems (MASs) has recently been a very attractive research field \citep{Caosurvey13,Qinsurvey17}. One of the main reasons for this is the broad area of its application to complex engineering tasks such as flocking/formation control of mobile robots and vehicles, see \citet{Ohsurvey2015}. Early works such as presented in \citet{Saber2004,Saber2007} focused on consensus problem for one-dimensional agents without a leader, where all agents converge to a final consensus value determined by their precise initial conditions. On the other hand, pinning a group of agents with general higher-order dynamics to a leader is considered in \cite{Wang2002} to synchronize all nodes to the leader node's trajectory for all initial conditions. Such synchronization problem, also called as {\emph{cooperative tracking}}, was studied for general linear time-invariant (LTI) agents in \cite{Fax2004,Li2011,Li2010} without pinning and in \cite{Zhang2011} for pinning control, utilizing Lyapunov stability theory. References \cite{Li2011,Li2010,Zhang2011} have considered decoupling the single-agent dynamics from the detailed network topology, through the concept of {\emph{synchronizing region}}, which greatly simplifies the design. The synchronizing region refers to the region in the complex plane for the graph matrix eigenvalues such that synchronization of a MAS is guaranteed. The idea of synchronizing region approach was first presented in \cite{Pecora98} for synchronization of identical coupled oscillators, where a master stability function is introduced to reduce effects of graph topology to robust stability of single-agent closed-loop dynamics. Due to this simplifying characteristic, the synchronizing region approach has been adapted for the solution of several MAS problems, e.g. optimal control in \cite{Hengster-Movric2014} and $H_\infty$ synchronization in \cite{Movric15aut}. 
% In \cite{Wang2002}, a distributed control scheme via pinning small number of agents to specific nodes, where the notion of pinning control is introduced.  
%
% The key idea in the synchronizing region approach is to reduce the global synchronization error dynamics of dimension $Nn$ to $N$ $n$-dimensional systems that depend on the eigenvalues of the Laplacian or pinned Laplacian matrix. With a coordinate transformation matrix, which leads to the transformed graph matrix in the upper triangular form, the global dynamics can be decomposed with respect to the graph topology, i.e. the eigenvalues of $L$ (or $L+G$), see \cite{Fax2004,Zhang2011}.
%
% in the concept of synchronizing region (c.f. Section \ref{sec:srwp}). (RECONSIDER THIS PART WITHOUT REFs)

Among the studies on synchronization of MASs, time-delay phenomena have been widely investigated since delays are unavoidable in communication links and quite common in control action. Thus, delays must be taken into account in analysis, and robustness to delays must be guaranteed for any realistic cooperative distributed control protocol. There are two main general approaches for analysis of time-delay systems (TDSs) \citep{gu}: frequency domain techniques and time-domain approach based on Lyapunov-Razumikhin functions (LRFs) or Lyapunov-Krasovskii functionals (LKFs). The first results based on Nyquist plots were given in \cite{Saber2004} for consensus of single-integrator agents on a fixed undirected graph topology, with a uniform delay. For the same problem on a digraph topology, a systematic approach in frequency domain
%, which introduces responsible eigenvalue concept 
for the analysis of delay margin guaranteeing consensus was proposed in \cite{Sipahi2011}, and a distributed robust consensus control was proposed in \cite{Lin2008} to achieve the desired $H_\infty$ performance. There are also more recent papers utilizing different frequency domain techniques for consensus of MASs with second-order agents on digraphs, e.g.~for double-integrator dynamics in \cite{Gomez2013SCL}, and for general second-order agents in \cite{Hou2017}. Both \cite{Gomez2013SCL} and \cite{Hou2017} consider decoupling/decomposition of the dynamics depending on eigenvalues of matrices describing the communication topology, e.g.~Laplacian eigenvalues in \cite{Hou2017}, so that the analysis can be performed with a set of low-order subsystems. Another remarkable work based on an inequality technique for second-order MASs with multiple time-varying delays is \cite{Zhu2010} where the decomposition of the global networked dynamics is considered.
%
%Another second order MAS work (nonuniform constant delays): \cite{Tian2009}
%Another work on single integrator dynamics (nonuniform time-varying delays): \cite{Sun2009}
%
Beside the MAS with low-order single-agents mentioned briefly above, several papers were published on consensus/synchronization problem of high-order linear agents with delays. In \cite{Munz2010,Munz2012}, robustness of consensus protocols to non-uniform constant feedback delays are investigated for identical and non-identical single-input single-output (SISO) agents with eigenvalues in the open left-half plane, networked on undirected topology. Consensus among multi-input multi-output (MIMO) identical agents on directed graphs with uniform constant communication delays has been studied in \cite{Wang2013stoor}, where the proposed approach is valid only for agents with eigenvalues in the closed left-half plane.~Frequency domain techniques utilized in \cite{Munz2010,Munz2012,Wang2013stoor} impose restrictions on single-agent dynamics such as requiring only SISO or only (marginally) stable agents. Predictor-based feedback protocols for consensus of MAS on directed graphs in the presence of communication and input delays have been proposed in \cite{Zhou2014}. Since predictor-based feedback suffers from implementation problems \citep{Zhou2012}, a truncated predictor approach has also been proposed in \cite{Zhou2014}, however it remains valid only for agents which are at most critically unstable just as in \cite{Wang2013stoor}. As for the time-domain methods, allowable uniform communication delay bound for consensus of general linear MASs on undirected networks, has been studied via an LKF approach in \cite{Zhang2014}. Based on LKFs, several papers such as \citep{Karimi2011,Karimi2012,Yi-Ping2015} worked on general synchronization under delays, however in a centralized manner  treating the entire MAS system as a whole. For identical agents on directed graphs with general LTI dynamics and uniform input delays, an algebraic Riccati equation (ARE) solution based on LRF has been proposed for cooperative synchronization in \cite{Movric15jfi}, with pinning control and synchronizing region approach. Very recently, allowable delay bound for the same problem without pinning on arbitrary network topology has been studied in \cite{Sheng2018}, however assuming the network has closed strong components. Another recent paper \cite{Zhang2018} proposes a parametric ARE-based solution for consensus of MAS with non-uniform constant communication delays on connected undirected graphs, where the tolerable delay bound is known, however the agents are assumed to be at most critically unstable. With the exception of \cite{Movric15jfi}, the proposed distributed methods in the mentioned papers impose restrictions on single-agent dynamics or on graph topology, even if the delays are uniform and constant. 
% Besides, several other difficulties/issues originating mostly from the utilized approaches arise in: (i) decoupling of the single-agent dynamics from the detailed communication topology, (ii) robustness to delay and graph topology uncertainties, (iii) conservatism both in allowable delay bound and the graph topology.
%Besides, most presently utilized approaches lead to difficulties for decoupling of the single-agent dynamics from the detailed communication topology, robustness to delay and graph topology uncertainties and reducing conservatism both in allowable delay bound and the graph topology.
%

In this paper, we consider synchronization of homogeneous general LTI agents with uniform constant communication (and/or control) delays. Agents are assumed interacting on a directed graph topology, thus general networks are considered in contrast to \cite{Munz2010,Zhang2014,Zhang2018} dealing only with undirected graphs. The underlying graph is assumed to have a directed spanning tree, which is a common and a mild assumption in the MAS literature. This is also a more general condition, compared to requiring the existence of closed strong components in \cite{Sheng2018}, where only the allowable delay bound is analyzed. There are no restrictions on agent dynamics as in \cite{Munz2010,Wang2013stoor,Zhou2014}, except for requiring single-agent stabilizability, which is inherently necessary for synchronization.  
Here we bring together, for the first time, the synchronizing region concepts of cooperative control, Lyapunov-Krasovskii delay-dependent stability analysis, and methods familiar from robust control to present a less conservative robust distributed control design. 
An LKF based LMI approach is thus proposed for both the analysis of cooperative synchronization in delayed MAS and the synchronization control design. 
Utilizing the well-known results of synchronizing region concept, we deal with robust stability of the single-agent closed-loop dynamics, where the uncertain parameters stand for the graph matrix eigenvalues and the delay, instead of dealing with the entire global networked dynamics, as e.g.~in \cite{Karimi2011,Karimi2012,Yi-Ping2015}. The proposed method relies on a newly revealed quasi-convexity property with respect to graph matrix eigenvalues, which is found to be crucial for our main development. Namely, based on quasi-convexity, a novel approach is presented for analyzing and treating the graph matrix eigenvalues. The proposed approach relaxes the conventional synchronizing region methodology in \cite{Movric15jfi}, which is found to be quite conservative both for delay bound and network topology requirements. This relaxation relies on using multiple LKFs and taking advantage of the quasi-convexity. In particular, we propose a unified approach for determining the allowable delay bound and the synchronizing region for any given state-feedback controller, and designing distributed cooperative state-feedback controls. Only the delay upper bound and the approximate region in the complex plane where the Laplacian eigenvalues should reside need to be known. 

The main contributions of this paper are summarized as follows: 
\begin{enumerate}
	\item A new approach to reveal the effects of network topology under communication/control delays is proposed in the spirit of synchronizing region. The synchronization criterion for delay bound and graph matrix eigenvalues is relaxed compared to the classical synchronizing region approach in \cite{Movric15jfi}, leading to considerably reduced conservatism. % by using multiple LKFs %as opposed to using single LRF in \cite{Movric15jfi}.
%The approach can be adapted to synchronization control problems for the delay-free case as well. 
	\item An efficient algorithm to estimate the delay-dependent synchronizing region is given based on feasibility of LMIs.
	\item A direct robust cooperative synchronization control design is provided without tuning a coupling gain, as opposed to most conventional synchronizing region designs, e.g. in \cite{Li2010,Zhang2011,Movric15jfi}. 
\end{enumerate}
% Building on existing results on robust control and TDSs, this paper brings together, for the first timewe provide a unified approach s is presented the proposed method brings a  enables to work on the single-agent dimensions approach brings utilizing the convexity 

The paper is organized as follows. Section \ref{sec:pre} provides notation, graph properties, and quasi-convexity preliminaries. Section \ref{sec:ps} presents the MAS dynamics and the problem statements. The proposed LKF based LMI approach to find the allowable delay bound is given in Section \ref{sec:adb}.~Based on the main results in Section \ref{sec:adb}, an efficient algorithm to obtain an estimate of the delay-dependent synchronizing region and two approaches for distributed state-feedback control design are given in Sections \ref{sec:sr} and \ref{sec:control}, respectively.~The results and improvements are verified via numerical simulations in Section~\ref{sec:ex}, and Section~\ref{sec:conc} concludes the paper. 
\section{Preliminaries} \label{sec:pre}

 Throughout the paper, $\rr^n$ denotes the n-dimensional real Euclidean space, and $\rr^{n\times m}$ is the set of all $n\times m$ matrices. $\cc$ stands for the complex numbers. For any complex number $z$, we denote its conjugate by $z^*$. 
The notation $P\succ 0$, for $P\in\rr^{n\times n}$ means that $P$ is symmetric and positive definite.~The symmetric elements of symmetric (or hermitian) matrices are denoted by $\star$. The notation $P^H$ stands for the hermitian (or conjugate) transpose of the matrix $P$. The space of functions $\phi : [-h,0]\rightarrow \rr^n$, which are absolutely continuous on $[-h,0]$, and have square integrable first order derivatives, i.e. $\dot{\phi}\in L_2(-h,0)$, is denoted by $W[-h,0]$ with the norm $\| \phi \|_W = {\text{max}}_{\theta \in [-h,0]}|\phi(\theta)| + [\textstyle\int_{-h}^0 |\dot\phi(\theta)|^2 ds]^{1/2}$. For $x:\rr\rightarrow\rr^{n}$, we denote $x_t(\theta)\triangleq x(t+\theta), \; \theta \in [-h,0]$. 
\subsection{Graph Theory \& Properties}
$\mathcal{G}=(\mathcal{V},\mathcal{E})$ represents a graph with $N$ nodes $\mathcal{V}=\{v_1,\ldots,v_N \}$ and a set of edges $\mathcal{E}\subseteq\mathcal{V}\times\mathcal{V}$. An edge rooted at $v_j$ and ended at $v_i$ is denoted by ($v_j,v_i$). We consider only simple graphs, i.e. there are no repeated edges or self-loops. A sequence of successive edges from $v_i$ to $v_j$ in the form $\{(v_i,v_k), (v_k,v_l), \ldots, (v_m,v_j) \}$ is called as {\emph{directed path}}. A graph is {\emph{strongly connected}} if every two nodes can be joined by a directed path. The graph is said to contain a directed {\emph{spanning tree}} if there exists a node $v_0$ such that every other node in $\mathcal{V}$ can be connected to $v_0$ by a directed path starting from $v_0$. Such a special node is then called a {\emph{root node}} denoted by $v_{i_r}$. 

{\emph{Adjacency (connectivity) matrix}} is denoted as $E = \{ \alpha_{ij} \}$, with $\alpha_{ij}>0$ if $(v_j,v_i)\in\mathcal{E}$ and $\alpha_{ij}=0$ otherwise. The set of neighbors of node $v_i$ is denoted as $\mathcal{N}_i = \{v_j: (v_j,v_i) \in\mathcal{E}\}$, i.e. the set of nodes with edges coming into $v_i$. $D = \text{diag}(d_1,\ldots,d_N)$ with $d_i = \textstyle{\sum}\alpha_{ij}$ the weighted degree of node $i$ denotes the in-degree matrix for $\mathcal{G}$. Then the {\emph{(graph) Laplacian matrix}} is defined as $L = D-E$. The Laplacian matrix $L$ has a simple zero eigenvalue if and only if the digraph contains a spanning tree. %the undirected graph is connected, or 
\subsection{Convex Polytopes and Quasi-convexity}
% The $M$-unit simplex is denoted by $\Lambda_M$. 
A compact and convex polytope, denoted by $\Delta_M$, is characterized also by the set of its vertices, $V:= \{ \nu_1, \ldots, \nu_M\}$. We denote the operation of taking the convex hull by ${\mathbf{co}}\{\cdot\}$, and the operation of taking the set of vertices by ${\mathbf{vert}} \{ \cdot \}$. So, by definition $\Delta_M = {\mathbf{co}}\{V\}$ and $V = {\mathbf{vert}} \{ \Delta_M \}$. Then we have the following. 
%Let us also represent the following definition and theorem on quasi-convexity. 
%
\begin{definition}[Quasi-convexity, \cite{Apkarian2000}]
Let $S$ be a convex subset of $\rr^n$. A function $f: S\rightarrow \rr$ is quasi-convex if for all $u,v$ in $S$ and $\alpha\in[0,1]$, $f(\alpha u + (1-\alpha)v) \leq {\rm{max}} \{ f(u) , f(v) \}$. 
\label{def:qc}
\end{definition}
\begin{theorem}[Multi-quasi-convexity, \cite{Apkarian2000}] \label{th:mqc}
Consider a compact convex polytope $\Delta_M$ in $\rr^n$ and the directions $b_1,\ldots,b_q$ determined by the edges of $\Delta_M$. Assume that for any $x$ in $\Delta_M$, the function $f$ is quasi-convex on the line segments 
\begin{equation*}
L_{b_i}(x) := \{ z\in \Delta_M : z = x + \xi b_i, \xi \in \rr \}, i=1,\ldots,q.
%\label{eq:}
\end{equation*}
% $L_{d_i}(x)$ for $i=1,\ldots,q$. 
Then, $f$ has a maximum over $\Delta_M$ at a vertex of $\Delta_M$. 
\end{theorem}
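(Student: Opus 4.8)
The plan is to prove the stronger pointwise statement that for every $x\in\Delta_M$ there exists a vertex $\nu\in V$ with $f(x)\leq f(\nu)$; the theorem then follows immediately, since the vertices belong to $\Delta_M$, so $\sup_{z\in\Delta_M} f(z)=\max_{\nu\in V} f(\nu)$ is attained at a vertex. Observe that this reduction needs no continuity of $f$: the finiteness of $V$ makes attainment automatic, and only quasi-convexity along the edge directions is used. I would argue by induction on $d=\dim F$, where $F$ is the \emph{minimal} face of $\Delta_M$ containing $x$, equivalently the face in whose relative interior $x$ lies. The base case $d=0$ is trivial, since then $F=\{x\}$ is itself a vertex.

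For the inductive step, assume $d\geq 1$ and that the claim holds for all points contained in faces of dimension strictly less than $d$. Here I would invoke two standard structural facts about polytopes. First, every edge of the face $F$ is also an edge of $\Delta_M$, so its direction is among $b_1,\ldots,b_q$. Second, because the $1$-skeleton of $F$ is connected, the edge directions of $F$ span the direction space of its affine hull $\mathrm{aff}(F)$; as $d\geq 1$ there is therefore at least one index $i$ for which $b_i$ is parallel to $\mathrm{aff}(F)$. Fix such a $b_i$. Since $F=\Delta_M\cap\mathrm{aff}(F)$ and the line $\{x+\xi b_i:\xi\in\rr\}$ lies in $\mathrm{aff}(F)$, the set $L_{b_i}(x)$ coincides with the intersection of this line with $F$, a bounded segment $[x^-,x^+]$. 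Because $x$ is in the relative interior of $F$ and $b_i$ is parallel to $\mathrm{aff}(F)$, both endpoints $x^-$ and $x^+$ lie on the relative boundary of $F$, hence in proper faces of $F$; these are again faces of $\Delta_M$, of dimension at most $d-1$.

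I would then apply the hypothesis. Since $f$ is quasi-convex on $L_{b_i}(x)=[x^-,x^+]$ and $x=\alpha x^-+(1-\alpha)x^+$ for some $\alpha\in[0,1]$, Definition \ref{def:qc} gives $f(x)\leq\max\{f(x^-),f(x^+)\}$. Applying the induction hypothesis to $x^-$ and $x^+$, each of which lies in a face of dimension $<d$, yields vertices $\nu',\nu''\in V$ with $f(x^-)\leq f(\nu')$ and $f(x^+)\leq f(\nu'')$. Consequently $f(x)\leq\max\{f(\nu'),f(\nu'')\}\leq\max_{\nu\in V} f(\nu)$, which closes the induction and establishes the pointwise claim for all $x\in\Delta_M$.

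The main obstacle is geometric rather than analytic: one must be sure that an interior point of a $d$-dimensional face can always be displaced along one of the prescribed directions $b_i$ so as to reach the relative boundary on \emph{both} sides, landing in strictly lower-dimensional faces. This rests squarely on the two structural facts quoted above, namely that edges of a face are edges of the whole polytope and that the edge directions span each affine hull, which together guarantee that a usable direction $b_i$ parallel to $F$ exists at every inductive stage. Once this bookkeeping is secured, the analytic content reduces to a single invocation of segment quasi-convexity per step, and the termination of the induction at vertices completes the argument.
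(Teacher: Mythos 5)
The paper does not prove this statement at all: it is quoted verbatim as a known result with a citation to Apkarian and Tuan, so there is no in-paper argument to compare against. Your proof is, on its own terms, correct and self-contained. The induction on the dimension of the minimal face containing $x$ is sound, and the three structural facts you lean on are all standard for polytopes: every face of a face of $\Delta_M$ is a face of $\Delta_M$ (so edge directions of $F$ are among the $b_i$), $F=\Delta_M\cap\mathrm{aff}(F)$, and the endpoints of a maximal chord through a relative-interior point of a compact convex set lie on the relative boundary, hence in proper faces of strictly smaller dimension. One small remark: the appeal to connectedness of the $1$-skeleton and the spanning property of edge directions is more than you need; since $\dim F\geq 1$, the face $F$ has at least one edge, and any such edge already supplies a direction $b_i$ parallel to $\mathrm{aff}(F)$. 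You are also right that no continuity of $f$ is required, since attainment of the maximum follows from the finiteness of the vertex set once the pointwise bound $f(x)\leq\max_{\nu\in V}f(\nu)$ is established. This is essentially the standard argument for such vertexization results and is a perfectly acceptable proof of the cited theorem.
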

%
%%%%%%%%%%%
\section{Multi-agent System and Problem Statement} \label{sec:ps}
We consider a set of $N$ homogenous agents having general LTI dynamics with control input delay, $\tau_{\rm{in}}\geq 0$
% (identical) (nodes)
\begin{equation}
\dot{x}_i(t)=A  x_i(t)+B u_i(t-\tau_{\rm{in}}), \quad i=1,\ldots,N,
\label{eq:sadyn}
\end{equation}
and a leader system with autonomous dynamics % (node 0)
\begin{equation}
\dot{ x}_0(t)=A  x_0(t)
\label{eq:ldyn}
\end{equation}
where $ x_0, x_i\in\rr^n$, $ u_i\in\rr^m$ are states and inputs, respectively, and $A\in\rr^{n\times n}$ and $B\in\rr^{n\times m}$. One can consider the leader system (\ref{eq:ldyn}) as an exosystem serving as a command generator  \citep{Zhang2011}, which is to be tracked by all the agents. For that purpose, the leader pins to a subset of agents. 

The {\emph{distributed cooperative tracking (or synchronization)}} control goal is to find $u_i$ depending on $x_i, \; x_{j\in\mathcal{N}_i}$ such that all agents' states asymptotically synchronize to the trajectory of the leader node (\ref{eq:ldyn}), i.e.~$\parallel x_i(t) - x_0(t)\parallel \rightarrow 0$ as $t\rightarrow\infty$. We consider the distributed state-feedback control of the form 
%
%The leader pins to a subset of agents, and it is known that the agents' states synchronize to the trajectory of the leader node (\ref{eq:ldyn}), i.e. $\parallel x_i(t) - x_0(t)\parallel \rightarrow 0$ as $t\rightarrow\infty$, with an appropriate distributed (pinning) control law \cite{Zhang2011}. To find such controls $u_i$ depending on $x_i, \; x_{j\in\mathcal{N}_i}$ is called as {\em{distributed cooperative tracking (or synchronization)}} control problem. %pinning control in \cite{Li2010} 
%For this control purpose, we investigate the local state-feedback control law
\begin{equation}
 u_i(t)= K e_i(t),
\label{eq:fb}
\end{equation}
applied to each agent, $\forall i$,  with the {\emph{delayed local neighborhood error}}
\begin{equation}
e_i (t)=\sum_j{\alpha_{ij}\left[ x_j(t-\tau_{\rm{c}})- x_i(t-\tau_{\rm{c}})\right]+g_i\left[ x_0(t-\tau_{\rm{c}})- x_i(t-\tau_{\rm{c}})\right]}
\label{eq:locerr}
\end{equation}
where $\alpha_{ij}$ are the elements of adjacency matrix $E$ of the directed graph topology of the MAS, $g_i \geq 0$ are the pinning gains that are nonzero only for a few agents directly connected to the leader, and $\tau_{\rm{c}}\geq 0$ represents the communication delay. 
Then, the closed-loop dynamics of each agent with the control (\ref{eq:fb}) becomes
\begin{equation}\label{eq:clsingle}
\dot{x}_i(t)=A x_i(t)+ B K e_i(t-\tau_{\rm{in}}).
\end{equation}
%
% \begin{remark}
% Note that one can also consider delayed control input in (\ref{eq:sadyn}), i.e. $\bar u_i(t) \rightarrow \bar u_i(t-\tau_c)$, together with the communication delay in (\ref{eq:locerr}), where the overall delay is to be considered as $\tau + \tau_c$ in (\ref{eq:clsingle}). 
% \end{remark}

For future LKF analysis, we assume that the initial conditions of the MAS (\ref{eq:sadyn}) are $x_{it}(\theta)\in\phi$ and $u_{it}(\theta)\in\phi$, $\theta\in[-\tau,0]$, where $\tau = \tau_{\rm{in}} + \tau_{\rm{c}}$ is the total delay. Defining the synchronization error of an agent as $\delta_i=x_i-x_0$ and considering the total delay $\tau$, %$\tau = \tau_{\rm{in}} + \tau_{\rm{c}}$,
the global synchronization error dynamics of the networked system can be written straightforwardly as  
\begin{equation}
\dot{\delta}(t)= \left( I_N \otimes A \right) \delta (t) - (L+G)\otimes B K \delta (t-\tau),
\label{eq:glerr}
\end{equation}
where $\delta (t) = \left[\delta_1^T (t) \ldots \delta_N^T (t) \right]^T \in \rr^{N\cdot n}$, and $G={\rm{diag}}(g_1,\ldots, g_N)$ is the {\emph{pinning gain matrix}}. Notice that the {\emph{pinned (graph) Laplacian matrix}} ($L+G$) appears in synchronization dynamics (\ref{eq:glerr}), instead of the Laplacian matrix ($L$) itself. Let us here present the assumption on the graph topology that holds throughout the paper, and a relevant lemma on the pinned Laplacian matrix which is important and well-known. 
%Throughout the paper, digraphs are considered, and the following assumption holds for the graph topology. 
\begin{assumption}\label{assume1}
The digraph $\mathcal{G}$ contains a directed spanning tree with at least one nonzero pinning gain into a root node. 
\end{assumption}

\begin{lemma}\citep{Wang2002,Zhang2011} \label{lem:lapeig}
Under Assumption \ref{assume1}, $L+G$ is non-singular with all the eigenvalues ($\lambda_j$) in the open right-half complex plane, i.e. $\lambda_j \in \cc_{+}$, $j=1,\ldots$,N. 
\end{lemma}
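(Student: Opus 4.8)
The plan is to exploit the sign pattern of the pinned Laplacian and localize its spectrum, treating non-singularity separately from the half-plane claim. First I would record the structure of $M := L+G$. Since $L=D-E$ with $D=\mathrm{diag}(d_i)$, $d_i=\sum_j\alpha_{ij}$, and the adjacency entries satisfy $\alpha_{ij}\ge 0$, $\alpha_{ii}=0$, the entries of $M$ are $m_{ii}=d_i+g_i\ge 0$ on the diagonal, $m_{ij}=-\alpha_{ij}\le 0$ off-diagonal, and each row sum equals $\sum_j m_{ij}=(d_i+g_i)-\sum_{j\ne i}\alpha_{ij}=g_i\ge 0$. Thus $M$ is a weakly diagonally dominant matrix with nonpositive off-diagonal entries, and the dominance is strict precisely in those rows $i$ for which $g_i>0$.

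Next I would localize the spectrum by Ger\v{s}gorin's disc theorem. Every eigenvalue of $M$ lies in some disc centered at the real number $m_{ii}=d_i+g_i$ with radius $\sum_{j\ne i}|m_{ij}|=d_i$. Each such disc is contained in the closed right half-plane, because its leftmost point is $(d_i+g_i)-d_i=g_i\ge 0$; moreover the disc meets the imaginary axis at most at the origin, since requiring $|\,\mathrm{i}\omega-(d_i+g_i)|\le d_i$ forces $\omega^2+g_i^2+2d_ig_i\le 0$, hence $\omega=0$ and $g_i=0$. Consequently $\mathrm{Re}(\lambda_j)\ge 0$ for every eigenvalue, and any eigenvalue lying on the imaginary axis must equal $0$.

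The crux is therefore to show $0\notin\mathrm{spec}(M)$, i.e. that $M$ is non-singular; this I expect to be the main obstacle, because weak diagonal dominance alone is insufficient and $M$ need be neither strictly dominant nor irreducible (Assumption~\ref{assume1} only guarantees a directed spanning tree, not strong connectivity, and $G$ may have a single nonzero entry). The tool I would use is weakly chained diagonal dominance: it suffices to produce, from each index $i$, a path in the digraph of $M$ (with an arc $i\to j$ whenever $m_{ij}\ne 0$) that terminates at an index $k$ whose row is strictly dominant, i.e. $g_k>0$. The spanning-tree hypothesis supplies exactly this. Let $v_{i_r}$ be a root into which the pinning is nonzero, so $g_{i_r}>0$; by definition the root reaches every node of $\mathcal{G}$ by a directed path. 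The bookkeeping step is that an arc $i\to j$ of the digraph of $M$ ($m_{ij}\ne 0$, i.e. $\alpha_{ij}>0$) corresponds to the edge $(v_j,v_i)\in\mathcal{E}$, so reversing a directed $\mathcal{G}$-path from $v_{i_r}$ to $v_i$ yields a path from $i$ to $i_r$ in the digraph of $M$. Hence $M$ is weakly chained diagonally dominant and therefore non-singular.

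Finally I would combine the two facts: the Ger\v{s}gorin localization confines the spectrum to the closed right half-plane with the imaginary axis touched only possibly at $0$, while non-singularity rules out $\lambda=0$; therefore no eigenvalue lies on the imaginary axis and every eigenvalue satisfies $\mathrm{Re}(\lambda_j)>0$, i.e. $\lambda_j\in\cc_{+}$ for $j=1,\ldots,N$, as claimed. As an alternative to the chained-dominance argument, non-singularity could instead be obtained by a direct contradiction on $Mv=0$, selecting the component of $v$ of maximal modulus and propagating equality along neighbours back to a pinned node; I would keep the weakly-chained-dominance formulation as the cleaner route, since it isolates the single graph-theoretic input (root reachability) that Assumption~\ref{assume1} provides.
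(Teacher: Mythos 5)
Your proof is correct. Note first that the paper offers no proof of this lemma at all: it is stated as a known result with citations to the references, so there is nothing internal to compare against. Your argument is a clean, self-contained version of the standard M-matrix reasoning that underlies those references. The two halves are both sound: the Ger\v{s}gorin computation correctly shows each disc has leftmost point $g_i\ge 0$ and can touch the imaginary axis only at the origin (the inequality $\omega^2+g_i^2+2d_ig_i\le 0$ forcing $\omega=0$); and the weakly-chained-diagonal-dominance step correctly isolates the only place Assumption~1 is needed. You also get the one genuinely easy-to-botch bookkeeping point right: with the paper's convention $\alpha_{ij}>0\Leftrightarrow(v_j,v_i)\in\mathcal{E}$, an arc $i\to j$ in the digraph of $L+G$ corresponds to a $\mathcal{G}$-edge from $v_j$ to $v_i$, so a directed $\mathcal{G}$-path \emph{from} the pinned root $v_{i_r}$ \emph{to} $v_i$ reverses into exactly the chain from row $i$ to the strictly dominant row $i_r$ that the nonsingularity theorem requires. (This reachability also silently disposes of the degenerate case of an all-zero row: any node other than $v_{i_r}$ has an incoming edge, hence $d_i>0$.) The common textbook alternative is to argue directly that $L+G$ is a nonsingular M-matrix, or to run your own suggested maximum-modulus contradiction on $Mv=0$; all of these are essentially the same argument, and yours is a perfectly acceptable instantiation.
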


For the synchronization of the MAS (\ref{eq:sadyn}--\ref{eq:locerr}), the global synchronization error dynamics (\ref{eq:glerr}) is required to be asymptotically stable, i.e. $\lim_{t\to\infty} \delta(t) \rightarrow 0 $.~Thus, we state the considered problems as follows: 

\begin{problem}\label{prob1}
Find the allowable delay bound ($\tau\leq h$) in (\ref{eq:glerr}) such that the synchronization of the MAS (\ref{eq:sadyn}--\ref{eq:locerr}) is  guaranteed for all $\tau \in [0,h]$, for a given specific $K$. 
\end{problem}
\begin{problem}\label{prob2}
Find the region for the eigenvalues of $L+G$ in the complex plane where the MAS synchronization with (\ref{eq:sadyn}--\ref{eq:locerr}) is guaranteed for all $\tau \in [0,h]$ and for a given specific $K$. 
\end{problem}
\begin{problem}\label{prob3}
Provide a design method, namely find a distributed cooperative state-feedback control law (\ref{eq:fb}) with (\ref{eq:locerr}) $\forall\tau \in [0,h]$ such that the states of the agents (\ref{eq:sadyn}) synchronize to the states of the leader (\ref{eq:ldyn}), for a given delay bound $(h)$ and for a given region of the eigenvalues of $L+G$.
\end{problem}
\begin{remark}\label{rem1}
We assume that the total delay $\tau$ is constant and uniform for each communication link. Even though this assumption usually does not fit the most realistic cases, nevertheless it is a general assumption in the literature, see e.g. \cite{Saber2004,Zhou2014,Yi-Ping2015,Movric15jfi,Sheng2018,Wang2016}, allowing concise theoretical developments, in particular focusing on single-agent dynamics instead of the entire MAS. 
Moreover, from the practical side, even if the delays are originally found to be unequal, they can be made uniform through specific implementation by using time-stamps and inserting artificial delays on the signals received from neighboring agents, to guarantee the results presented in this work remain valid. 
% Also, the delays may not even be uniform in realistic implementations, it is possible to make uniform all delays for each communication links, using the time-stamps and inserting artificial delays on the signals received from any agent, to guarantee the results presented in this work is valid.
\end{remark}

Note that (\ref{eq:glerr}) is a retarded functional differential equation (RFDE), which has infinitely many characteristic roots. The stability and stabilization problem of TDSs is known to be NP-hard \citep{gu}. As the number of agents $N$ increases, i.e.~the dimension of RFDE (\ref{eq:glerr}) gets higher, the computational load for the analysis/synthesis problem of the TDS becomes cumbersome. This is so on large scale networks even for the delay-free case. 
%Furthermore, it is more convenient to consider lower order dynamics for the distributed control design. 
%decoupling the agent dynamics from the network topology features provides insight to investigate the effect of the graph topology or to provide control design regarding the network structure. 
To avoid such difficulties and to reveal the effects of graph topology on the global dynamics, we resort to synchronizing region approach \citep{Li2011,Li2010,Zhang2011}. 
%the {\em{synchronizing region approach}}, which also constructs the basis of this study, has been developed. We provide an overview and some useful results on that approach in the following subsection. 
%
% Obviously the solution to Problem \ref{prob3} is the solution to the cooperative tracking control problem.

%%%
% \subsection{Synchronizing Region Approach with Pinning Control} \label{sec:srwp}% An Overview of the 

% The idea of synchronizing region approach was first presented in \cite{Pecora98} for the synchronization of identical coupled oscillators where a master stability function is introduced to reduce the effects of graph topology to the robust stability of single-agent closed loop dynamics. In \cite{Wang2002}, a distributed control scheme via pinning small number of agents to specific nodes, where the notion of pinning control is introduced.  Based on Lyapunov stability theory together with the synchronizing region and pinning control approaches, the distributed state-feedback design for synchronization has been considered via LMIs in \cite{Li2011,Li2010} and via local ARE design in \cite{Zhang2011}. 

The key idea in the synchronizing region approach introduced below is to reduce the global synchronization error dynamics of dimension $Nn$ to $N$ $n$-dimensional systems that depend on the eigenvalues of the graph matrix. With a coordinate transformation matrix, which leads to the transformed graph matrix in the upper triangular form, the global dynamics can be decomposed with respect to the graph topology, i.e.~the eigenvalues of $L$ (or $L+G$), see \cite{Fax2004,Zhang2011}. This decomposition was extended to the leader follower MAS synchronization problem with uniform constant delay in \cite{Movric15jfi} as in the following lemma, under Assumption \ref{assume1}.
\begin{lemma} \citep{Movric15jfi} \label{lem:decom}
The trivial solution of the system (\ref{eq:glerr}) is asymptotically stable if and only if the trivial solutions of the systems 
\begin{equation}
\dot{x}(t)=A x(t)-\lambda_j B K x(t-\tau), \; j=1,\ldots, N
\label{eq:lem1}
\end{equation}
each having the order of a single-agent,~where $\lambda_j$ are the eigenvalues of L+G,~are asymptotically stable $\forall j$. 
\end{lemma}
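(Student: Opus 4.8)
The plan is to decompose the global RFDE (\ref{eq:glerr}) by triangularizing the pinned Laplacian $L+G$ and then to establish the stability equivalence through a factorization of the characteristic function. First I would invoke the Schur triangularization: since $L+G\in\rr^{N\times N}$, there exists a unitary $U\in\cc^{N\times N}$ with $U^H(L+G)U=\Lambda$, where $\Lambda$ is upper triangular carrying the eigenvalues $\lambda_1,\ldots,\lambda_N$ on its diagonal. (By Lemma \ref{lem:lapeig} these lie in $\cc_{+}$, but that fact is not needed for the decomposition itself.) I deliberately use the Schur form rather than a Jordan form so that only upper-triangularity, and not diagonalizability, is required.

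Next I would apply the coordinate change $\zeta(t)=(U^H\otimes I_n)\,\delta(t)$ to (\ref{eq:glerr}). By the mixed-product property of the Kronecker product, $(U^H\otimes I_n)(I_N\otimes A)(U\otimes I_n)=I_N\otimes A$ and $(U^H\otimes I_n)\big((L+G)\otimes BK\big)(U\otimes I_n)=\Lambda\otimes BK$, so the transformed dynamics read
\begin{equation*}
\dot\zeta(t)=(I_N\otimes A)\,\zeta(t)-(\Lambda\otimes BK)\,\zeta(t-\tau).
\end{equation*}
Since $(U^H\otimes I_n)$ is a constant nonsingular (indeed unitary) map with bounded inverse, it is a bijection on the state space that preserves asymptotic stability; hence (\ref{eq:glerr}) is asymptotically stable if and only if this $\zeta$-system is.

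The heart of the argument is the characteristic matrix of the $\zeta$-system, $M(s)=sI_{Nn}-(I_N\otimes A)+(\Lambda\otimes BK)\,e^{-s\tau}$. Because $\Lambda$ is upper triangular, $M(s)$ is block upper triangular with $n\times n$ diagonal blocks $sI_n-A+\lambda_j BK\,e^{-s\tau}$, so its determinant is the product of the diagonal block determinants,
\begin{equation*}
\det M(s)=\prod_{j=1}^N\det\!\big(sI_n-A+\lambda_j BK\,e^{-s\tau}\big).
\end{equation*}
The factors on the right are exactly the characteristic functions of the decoupled single-agent systems (\ref{eq:lem1}), so the characteristic roots of the full RFDE are precisely those of the $N$ subsystems, counted with multiplicity. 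I would then close the equivalence with the spectral characterization of stability for retarded systems: such a system is asymptotically stable iff its spectral abscissa is negative (the roots have real parts bounded above, with only finitely many in any right half-plane, so the supremum is attained). As the full spectrum is the union of the subsystem spectra, the full spectral abscissa equals the maximum of the $N$ subsystem abscissae, which is negative iff each one is, giving the stated ``if and only if.''

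The main obstacle is the rigor of this last step rather than the algebra. One must justify that stability of these infinite-dimensional systems is governed solely by the location of characteristic roots (valid for retarded, though not neutral, delay equations) and that the block-triangular factorization correctly accounts for multiplicities, including the case of repeated $\lambda_j$, where $\Lambda$ is genuinely triangular rather than diagonal; here the factorization still holds because only the diagonal entries of $\Lambda$ enter $\det M(s)$. A minor point to verify is that employing a complex unitary transformation on the real system (\ref{eq:glerr}) is harmless, since asymptotic stability of the real system and of its complexification coincide.
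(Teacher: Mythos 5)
The paper does not reproduce a proof of this lemma; it is quoted from the cited reference \citep{Movric15jfi}, so there is no in-paper argument to compare against line by line. Your proof is correct and is essentially the standard argument behind this decomposition result: Schur triangularization of $L+G$, the Kronecker mixed-product identity to obtain a block upper-triangular retarded system, factorization of the characteristic quasi-polynomial $\det\bigl(sI_n-A+\lambda_j BK e^{-s\tau}\bigr)$ over the diagonal blocks, and the spectral characterization of asymptotic stability for retarded (non-neutral) equations to convert the union of spectra into the stated equivalence. You also correctly flag and resolve the three points where a careless version would be incomplete: non-diagonalizable $L+G$ (handled by using the Schur form and noting only diagonal entries enter the determinant), the passage to a complex coordinate system (harmless since stability of the real system and its complexification coincide), and the fact that the triangular form yields a cascade rather than a full decoupling (immaterial once stability is read off the characteristic roots). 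No gaps.
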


As a result of Lemma \ref{lem:decom} for the synchronization problem, instead of analyzing (\ref{eq:glerr}) with higher dimension ($N n$), one can investigate the robust stability of a lower dimension ($n$) system (\ref{eq:lem1}). This also provides an insight into graph topology both for analysis and design. 

Note that (\ref{eq:lem1}) is a complex RFDE in $\cc^n$ depending on $\lambda_j$. 
% Let us recall Lemma \ref{lem:lapeig} that the $\lambda_j$'s are complex numbers, so (\ref{eq:lem1}) describes a dynamical system generally in $\cc^n$.
Thus, to provide a convenient representation from the perspective of robustness, let us rewrite (\ref{eq:lem1}) in the form, 
\begin{equation}
\dot{x}(t)=A x(t) + \sigma A_d x(t-\tau),
\label{eq:crfde}
\end{equation}
where $A_d:=-BK$, and $\sigma\in \cc$. Then, the following definition is given to investigate the robust stability of (\ref{eq:crfde}) with respect to $\sigma$ and $\tau$. 
\begin{definition} \label{def:synch}
Given a complex RFDE as in (\ref{eq:crfde}), the delay-dependent synchronizing region (DSR) is a subset of the complex plane $\cc$ depending on the delay bound
\begin{equation}
S_c(h)=\left\{\sigma\in \cc : A x(t) + \sigma A_d x(t-\tau) {\rm{\; is \; asymptotically \; stable \;}} \forall\tau\in[0,h] \right\}.
\label{eq:sr}
\end{equation}
\end{definition}
Note that Definition \ref{def:synch} slightly differs from the one in \cite{Movric15jfi}, since we consider the synchronizing region where the robust stability for any $\tau\in[0,h]$ is guaranteed.~Definitely, the DSR depends also on the system dynamics ($A,B$) and the local feedback matrix ($K$). The use of $\sigma$ is also for the consideration of scaling the pinned Laplacian eigenvalues with a coupling constant $c>0$. This scaling approach will be considered later (c.f. Section \ref{sec:control2}) for control design with $ K = c \bar{K}$, where we result in $c \lambda_j \in S_c(h)$.

%%%%%%%%%%%%%%%%%%%%%%%%%%%%%

\section{Allowable Delay Bound for Synchronization} \label{sec:adb}
This section presents a solution to Problem \ref{prob1}, and provides a basis for the subsequent solutions of Problems \ref{prob2} and \ref{prob3}. For this purpose, we utilize the LKF based approach, which is known to reduce conservatism in analysis and control of TDSs as compared to the LRF approach \citep{Fridman2014book,Briat2015book}. This is due to a more general structure of LKF depending also on derivatives of the system states. 
Let us first present the Lyapunov-Krasovskii stability criterion for general TDSs, which is given in \cite{Fridman2014} with a slight modification as compared to \cite{gu}.% , for the functional depending on the state derivatives, i.e. $V(t,x_t,\dot x_t)$. 
\begin{theorem} [Lyapunov-Krasovskii Stability, \cite{Fridman2014}] \label{eq:kra1} Consider the TDS:
\begin{equation}
\dot{x}(t)=f(t,x_t), \; t\geq t_0,
\label{eq:tds}
\end{equation}
where $f:\rr \times\mathcal{C}[-h,0]\rightarrow \rr^n$ is continuous in both arguments and is locally Lipschitz continuous in the second argument. Suppose $f$ maps $\rr\times$ (bounded sets in $\mathcal{C}[-h,0]$) into bounded sets of $\rr^n$ and that $u,v,w:\rr_{\geq t_0}\rightarrow \rr_{\geq t_0}$ are continuous non-decreasing functions, $u(s)$ and $v(s)$ are positive for $s>0$, and $u(0)=v(0)=0$. Assuming $f(t,0)=0$, the trivial solution of (\ref{eq:tds}), i.e. $x(t)\equiv 0$, is uniformly stable, if there exists a continuous functional $V:\rr\times W[-h,0] \times L_2(-h,0) \rightarrow \rr^{+} $, which is positive-definite, i.e.
\begin{equation}
u(|x(t)|)\leq V(t,x_t,\dot{x}_t)\leq v(\parallel x_t \parallel_W),
\end{equation}
and such that its derivative along (\ref{eq:tds}) is non-positive in the sense that 
\begin{equation}
\dot{V}(t,x_t,\dot{x}_t)\leq -w(|x(t)|) .
\label{eq:kra2}
\end{equation}
If $w(s)>0$ for $s>0$, then the trivial solution is uniformly asymptotically stable. If in addition $\lim_{s\to\infty} u(s)=\infty$, then the uniform asymptotic stability is global. 
\label{th:lk}
\end{theorem}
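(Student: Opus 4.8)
The plan is to mirror Lyapunov's direct method, transported to the functional (Krasovskii) setting, and to establish the three assertions --- uniform stability, uniform asymptotic stability, and its global version --- in that order, each building on the previous. First I would prove uniform stability. Fix $\epsilon>0$; since $v$ is continuous with $v(0)=0$ I can choose $\delta>0$ so small that $v(\delta)<u(\epsilon)$, noting $u(\epsilon)>0$. If $\|x_{t_0}\|_W<\delta$, the upper sandwich bound gives $V(t_0,x_{t_0},\dot x_{t_0})\le v(\|x_{t_0}\|_W)\le v(\delta)$; since $\dot V\le -w(|x(t)|)\le 0$ the functional $V$ is non-increasing along the solution, so $V(t,x_t,\dot x_t)<u(\epsilon)$ for all $t\ge t_0$. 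The lower bound then yields $u(|x(t)|)<u(\epsilon)$, and because $u$ is non-decreasing this forces $|x(t)|<\epsilon$ (otherwise $|x(t)|\ge\epsilon$ would give $u(|x(t)|)\ge u(\epsilon)$). This establishes uniform stability, with a $\delta$ independent of $t_0$.

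Next, assuming $w(s)>0$ for $s>0$, I would prove attractivity and hence uniform asymptotic stability. Along any solution starting in the stability region, $V(t,x_t,\dot x_t)$ is non-increasing and bounded below by $0$, so it converges; integrating the derivative inequality gives $\int_{t_0}^{\infty} w(|x(s)|)\,ds\le V(t_0,x_{t_0},\dot x_{t_0})<\infty$. I would then argue $|x(t)|\to 0$ by contradiction: if not, there is $c>0$ and a sequence of times at which $|x(t)|\ge c$. Here the hypothesis that $f$ maps $\rr\times(\text{bounded sets})$ into bounded sets enters: since $x_t$ stays bounded by uniform stability, $\dot x(t)=f(t,x_t)$ is bounded, so $x(t)$ is uniformly continuous and $|x(t)|\ge c/2$ persists on intervals of a fixed minimal length. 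On each such interval $w(|x(s)|)\ge w(c/2)>0$, which would make $\int w(|x(s)|)\,ds$ diverge, a contradiction. Hence $|x(t)|\to 0$, and uniformity of the convergence time in $t_0$ follows because the number of such intervals is bounded by $V(t_0,\cdot)/\bigl(w(c/2)\cdot(\text{length})\bigr)\le v(\delta)/\bigl(w(c/2)\cdot(\text{length})\bigr)$, a $t_0$-independent quantity. To close the argument I would propagate this to the functional state: once $|x(s)|$ is small on a whole window of length $h$, both the supremum part and, via the dynamics, the $L_2$ derivative part of $\|x_t\|_W$ are small.

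The step I expect to be the main obstacle is precisely this passage in the attractivity argument: the derivative bound $\dot V\le -w(|x(t)|)$ controls only the \emph{instantaneous} value $|x(t)|$, whereas the upper sandwich bound on $V$ is expressed through the \emph{functional} norm $\|x_t\|_W$. Unlike the ODE case, I cannot directly conclude that $V$ strictly decreases whenever the functional state is large, so a naive monotonicity argument fails. Bridging this mismatch is the heart of the Krasovskii method and is exactly why the regularity hypotheses on $f$ are imposed: they furnish the uniform continuity of $x(t)$ that upgrades the integral estimate into genuine convergence --- a Barbalat-type conclusion --- and they let smallness of $|x|$ over a delay window enforce smallness of the whole segment $x_t$. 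Finally, for the global conclusion I would invoke radial unboundedness, $\lim_{s\to\infty}u(s)=\infty$: for an arbitrary initial segment $V(t_0,\cdot)$ is finite, the inequality $u(|x(t)|)\le V(t_0,\cdot)$ bounds $|x(t)|$ on $[t_0,\infty)$, and properness of $u$ makes this bound finite for every initial condition, so the attractivity argument extends to the entire state space.
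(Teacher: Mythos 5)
The paper does not prove this statement: it is quoted verbatim (as Theorem~\ref{th:lk}) from the cited reference \cite{Fridman2014} and used as a black box, so there is no in-paper proof to compare against. Your argument is the standard textbook proof of the Krasovskii stability theorem and is essentially sound: the uniform-stability step via the sandwich $u(|x(t)|)\le V\le v(\|x_{t_0}\|_W)$ and monotonicity of $V$ is exactly right; the attractivity step via $\int_{t_0}^{\infty}w(|x(s)|)\,ds\le v(\delta)$ plus a Barbalat-type uniform-continuity argument (which is indeed where the boundedness hypothesis on $f$ and the local Lipschitz property with $f(t,0)=0$ are consumed, both for bounding $\dot x$ and for controlling the $L_2$-derivative part of $\|x_t\|_W$ over a window) is the correct resolution of the mismatch you rightly single out between the pointwise decay rate $w(|x(t)|)$ and the functional upper bound $v(\|x_t\|_W)$; and the global claim follows from properness of $u$ as you say. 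The one place where your bookkeeping is too loose is the uniformity of the convergence time: bounding the \emph{number} of excursions above $c/2$ by $v(\delta)/\bigl(w(c/2)\,\ell\bigr)$ does not by itself locate the \emph{last} excursion uniformly in $t_0$ (a solution could in principle have a single late excursion). The standard fix is to argue by contraposition on a window: if $\|x_t\|_{C}\ge\delta_\eta$ for every $t$ in $[t_0,t_0+T]$, then each subinterval of length $h+\ell$ must contain an excursion contributing at least $w(\delta_\eta/2)\,\ell$ to the integral, so taking $T>(h+\ell)\,v(\delta)/\bigl(w(\delta_\eta/2)\,\ell\bigr)$ forces the existence of some $t^*\le t_0+T$ with $\|x_{t^*}\|_W<\delta_\eta$, after which uniform stability takes over. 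With that adjustment your proof is complete.
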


Note that, there are various model transformation and bounding techniques applied both in LRF and LKF approaches, which usually give conservative estimates of the upper delay bound $h$ for which the stability is guaranteed. One can read \cite{Briat2015book} for a detailed overview on these techniques. For comparison, note that the conservatism in \cite{Movric15jfi} comes also from an approximate model transformation as well as the nature of LRF approach itself. In contrast, here the utilized LKF approach does not require a model transformation, and relies on a bounding technique given in the following lemma, which is commonly used and gives satisfactory reduced conservatism for RFDEs with constant delays \citep{Fridman2014book}.

\begin{lemma}[Jensen Inequality, \citep{Fridman2014,gu2000}] \label{lem:jensen}For any constant matrix $R \in \rr^{n\times n}$, and any function $\phi\in L_2[-h,0]$, the following inequality holds: %(the space of square integrable functions)
\begin{equation}
\int_{-h}^0 \phi^T(s) R \phi(s) ds \geq \frac{1}{h} \left( \int_{-h}^0 \phi^T(s) ds \right) R \left( \int_{-h}^0 \phi(s) ds \right).
\label{eq:jensen}
\end{equation}
\end{lemma}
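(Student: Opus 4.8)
The plan is to establish the inequality by a completion-of-squares argument that exploits the positive (semi)definiteness of $R$. Although the statement is phrased for an arbitrary constant matrix, the inequality is valid precisely when $R\succeq 0$, which is the relevant case throughout, since $R\succ 0$ in every Lyapunov-Krasovskii construction to follow; I would flag this at the outset, as the claim is false for indefinite $R$.

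First I would abbreviate the aggregate $\Phi := \int_{-h}^0 \phi(s)\,ds \in \rr^n$, so that the right-hand side of the asserted bound reads $\tfrac{1}{h}\Phi^T R \Phi$. The key step is to write down the manifestly nonnegative quantity
\begin{equation*}
\int_{-h}^0 \Big(\phi(s) - \tfrac{1}{h}\Phi\Big)^T R \Big(\phi(s) - \tfrac{1}{h}\Phi\Big)\,ds \geq 0,
\end{equation*}
whose nonnegativity is immediate from $R\succeq 0$. Expanding the integrand and invoking the two elementary identities $\big(\int_{-h}^0 \phi^T(s)\,ds\big)R\,\Phi = \Phi^T R\Phi$ (directly from the definition of $\Phi$) and $\int_{-h}^0 \Phi^T R \Phi\,ds = h\,\Phi^T R \Phi$ (the integrand being constant in $s$), the two cross terms and the quadratic-in-$\Phi$ term collapse to $-\tfrac{2}{h}\Phi^T R\Phi + \tfrac{1}{h}\Phi^T R \Phi = -\tfrac{1}{h}\Phi^T R\Phi$. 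Rearranging the resulting inequality then yields exactly (\ref{eq:jensen}).

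There is no genuine obstacle here, since this is a standard bounding lemma; the only point requiring care is the tacit sign assumption on $R$, and a brief remark that $R$ may be taken symmetric without loss of generality (the quadratic form $\phi^T R\phi$ only sees the symmetric part). An alternative route would substitute $\psi(s) := R^{1/2}\phi(s)$, reducing the claim to the componentwise vector Cauchy-Schwarz inequality $\| \int_{-h}^0 \psi(s)\,ds \|^2 \leq h \int_{-h}^0 \| \psi(s) \|^2\,ds$; I prefer completion of squares, as it avoids introducing the matrix square root $R^{1/2}$ and works directly from $R\succeq 0$.
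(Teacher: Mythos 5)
Your completion-of-squares argument is correct: expanding the nonnegative integral $\int_{-h}^0 (\phi(s)-\tfrac{1}{h}\Phi)^T R\, (\phi(s)-\tfrac{1}{h}\Phi)\,ds$ with $\Phi=\int_{-h}^0\phi(s)\,ds$ does collapse the cross and quadratic terms to $-\tfrac{1}{h}\Phi^T R\Phi$, and rearranging gives exactly (\ref{eq:jensen}). The paper itself offers no proof --- it imports the lemma from \cite{Fridman2014,gu2000} --- so there is no in-text argument to compare against; your derivation matches the standard one found in those references. Your caveat that the hypothesis should read $R\succeq 0$ (or $R\succ 0$, as in every subsequent use in the paper) rather than ``any constant matrix'' is well taken: the inequality fails for indefinite $R$, and the nonnegativity of your starting integral is precisely where positive semidefiniteness enters. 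The remark that $R$ may be assumed symmetric without loss of generality is likewise accurate, though in the paper $R$ is always symmetric by construction.
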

% \vspace{-12pt}
%
%Now, we can give the main theorem based on the extension of the LKF given in  \cite{Gouaisbaut2006} to the stability of a complex RFDE given by (\ref{eq:crfde}). %following the lines of \cite{Fridman2014}. 

Now, following the lines in \cite{Fridman2014book,Fridman2014}, we can give the main theorem on the stability of the complex RFDE (\ref{eq:crfde}) based on a commonly used LKF. 
\begin{theorem} [Delay-dependent stability for the complex RFDE] \label{th:main}
Given $h\geq 0$ and $\sigma\in\cc$, let there exists $n\times n$ matrices $P\succ 0, \;S\succ 0, \;R\succ 0$, such that the following LMI is feasible:
\begin{equation}
\begin{bmatrix}
    A^T P+PA+S-R     & \sigma P A_d+R  & h A^T R \\
    \star             & -S-R   & \sigma^{*} h A_d^T R \\
		\star             & \star   & -R \\
\end{bmatrix} \prec 0.
\label{eq:lmi1cons} 
\end{equation} 
Then, the system (\ref{eq:crfde}) is uniformly asymptotically stable for all delays $\tau \in [0,h]$. 
\label{th:crfde}
\end{theorem}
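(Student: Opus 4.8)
The plan is to exhibit a Lyapunov--Krasovskii functional (LKF) whose derivative along (\ref{eq:crfde}) is rendered negative definite \emph{exactly} by the LMI (\ref{eq:lmi1cons}), and then to invoke the Lyapunov--Krasovskii criterion of Theorem \ref{th:lk}. Since $\sigma\in\cc$ makes the solutions of (\ref{eq:crfde}) complex-valued, I would work throughout with Hermitian quadratic forms while keeping $P,S,R$ real and symmetric, so that $V$ stays real. A natural candidate is
\[
V(t,x_t,\dot x_t)=x^H(t)Px(t)+\int_{t-\tau}^{t}x^H(s)Sx(s)\,ds+h\int_{-h}^{0}\!\int_{t+\theta}^{t}\dot x^H(s)R\dot x(s)\,ds\,d\theta .
\]
With $P,S,R\succ0$ one gets at once $\lambda_{\min}(P)|x(t)|^2\le V\le v(\|x_t\|_W)$ for a suitable nondecreasing $v$, so the positivity and boundedness hypotheses of Theorem \ref{th:lk} hold, and it remains only to force $\dot V<0$.

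Differentiating along (\ref{eq:crfde}) and substituting $\dot x(t)=Ax(t)+\sigma A_dx(t-\tau)$ gives $\dot V_P=x^H(t)(A^TP+PA)x(t)+2\,\mathrm{Re}\bigl(\sigma\,x^H(t)PA_dx(t-\tau)\bigr)$, the telescoping term $\dot V_S=x^H(t)Sx(t)-x^H(t-\tau)Sx(t-\tau)$, and $\dot V_R=h^2\dot x^H(t)R\dot x(t)-h\int_{t-h}^{t}\dot x^H(s)R\dot x(s)\,ds$. The only non-pointwise contribution is the last integral; keeping its portion over $[t-\tau,t]$ and discarding the nonpositive remainder, the Hermitian form of Jensen's inequality (Lemma \ref{lem:jensen}) yields
\[
-h\!\int_{t-\tau}^{t}\!\dot x^H(s)R\dot x(s)\,ds\le-\tfrac{h}{\tau}\bigl(x(t)-x(t-\tau)\bigr)^HR\bigl(x(t)-x(t-\tau)\bigr)\le-\bigl(x(t)-x(t-\tau)\bigr)^HR\bigl(x(t)-x(t-\tau)\bigr),
\]
the final step using $\tau\le h$. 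Collecting the pointwise terms into $\zeta=[\,x^H(t)\ \ x^H(t-\tau)\,]^H$ produces $\dot V\le\zeta^H\Pi\zeta+h^2\dot x^H(t)R\dot x(t)$, where $\Pi$ is precisely the top-left $2\times2$ block of (\ref{eq:lmi1cons}). Writing $\dot x(t)=[\,A\ \ \sigma A_d\,]\zeta$ turns the residual into $\zeta^H[\,A\ \ \sigma A_d\,]^H(h^2R)[\,A\ \ \sigma A_d\,]\zeta$, and a Schur complement with respect to the negative-definite block $-R$ linearizes it, reproducing exactly the third row/column $[\,hA^TR,\ \sigma^*hA_d^TR,\ -R\,]$. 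Hence feasibility of (\ref{eq:lmi1cons}) is equivalent to negativity of $\zeta^H\Pi\zeta+h^2\dot x^HR\dot x$, which forces $\dot V\le-\epsilon|x(t)|^2$ for some $\epsilon>0$.

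The step I expect to be the crux is making the estimate \emph{uniform} in the unknown delay $\tau\in[0,h]$: the Jensen bound is intrinsically $\tau$-dependent (through the factor $h/\tau$ and the window length), and the essential observation is that $h/\tau\ge1$ permits replacing it by a $\tau$-free bound, so that one feasible LMI certifies stability for every $\tau\in(0,h]$ — this is precisely where conservatism enters. The boundary case $\tau=0$ must be handled separately, but there (\ref{eq:crfde}) reduces to $\dot x=(A+\sigma A_d)x$, and evaluating the quadratic form at $\zeta=[\,x^H\ \ x^H\,]^H$ shows that feasibility still forces $(A+\sigma A_d)^HP+P(A+\sigma A_d)\prec0$, i.e. $A+\sigma A_d$ Hurwitz, so stability persists by continuity. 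With $\dot V\le-\epsilon|x(t)|^2$ secured for all admissible $\tau$, Theorem \ref{th:lk} delivers uniform asymptotic stability of (\ref{eq:crfde}). A secondary point requiring care is the bookkeeping of the conjugates $\sigma,\sigma^*$, so that the assembled matrix is Hermitian and the Schur-complement manipulation is valid over $\cc$.
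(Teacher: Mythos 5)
Your proof is correct and follows essentially the same route as the paper: the same Lyapunov--Krasovskii functional, Jensen's inequality on the integral of $\dot x^H R \dot x$, and a Schur complement with respect to the $-R$ block to recover LMI (\ref{eq:lmi1cons}), concluding via Theorem \ref{th:lk}. The only difference is that you track the actual delay $\tau$ in the $S$-integral and the Jensen window (using $h/\tau\ge 1$ and treating $\tau=0$ separately) to make the claim uniform over $\tau\in[0,h]$ explicit, whereas the paper carries out the computation with the window $h$ throughout and leaves that uniformity implicit.
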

\begin{proof}
Consider the LKF
\begin{equation}
V (x, \dot x, t) =x^H(t) P x(t) + \int_{t-h}^{t}  x^H(s) S x(s)ds + h \int_{-h}^{0} \int_{t+\theta}^{t} \dot x^H(s) R \dot x(s)ds d\theta
\label{eq:lkfcons}
\end{equation}
where $P\succ 0, S\succ 0, R\succ 0$. Notice that in (\ref{eq:lkfcons}), $V:\rr\times W[-h,0] \times L_2(-h,0) \rightarrow \rr^{+}$ as in Theorem \ref{th:lk}, is real valued, even if the states (and their derivatives) are in a complex vector space. %, since the hermitian transposition results in a real valued functional with the quadratic form and the real symmetric positive-definite matrices. 
Then, (\ref{eq:lkfcons}) satisfies condition (\ref{eq:kra1}) considering the positive continuous non-decreasing functions $u(x)=x^H(t) P x(t)$ and $v(x, \dot x, t)=V (x, \dot x, t) |_{\tau=h}$, as in Theorem \ref{th:lk}. 

The double integral term in (\ref{eq:lkfcons}) can be rewritten \citep{Fridman2014} as
\begin{equation}
h \int_{-h}^{0} \int_{t+\theta}^{t} \dot x^H(s) R \dot x(s)ds d\theta = h \int_{t-h}^{t} (h+s-t) \dot x^H(s) R \dot x(s) ds, \; R \succ 0 . 
\label{eq:ditcons}
\end{equation}
Considering (\ref{eq:ditcons}), the derivative of the functional (\ref{eq:lkfcons}) is obtained as
\begin{equation}
\begin{split}
\frac{dV}{dt} = &  \; \dot x^H(t) P x(t) +  x^H(t) P \dot x(t) + x^H(t) S x(t) - x^H(t-h) S x(t-h) \\ 
& + h^2 \dot{x}^H (t) R \dot{x} (t) - h \int_{t-h}^{t} \dot x^H(s) R \dot x(s)ds.
\label{eq:dlkfcons}
\end{split}
\end{equation}
Now, applying Jensen's inequality (\ref{eq:jensen}) to the integral term in (\ref{eq:dlkfcons}), we can write the inequality 
\begin{equation}
\begin{split}
\frac{dV}{dt} \leq \;  \dot x^H(t) P x(t)  & + x^H(t) P \dot x(t) + x^H(t) S x(t) - x^H(t-h) S x(t-h) + h^2 \dot x^H (t) R \dot{x} (t) \\
& - \underbrace{\int_{t-h}^{t} \dot x^H(s) ds R \int_{t-h}^{t} \dot x(s)ds}_{=[x(t) \; -x(t-h)]^H R [x(t) \; -x(t-h)]} . 
\label{eq:dlkfineqcons}
\end{split}
\end{equation}
Denote $\eta(t) = [x(t) \quad x(t-h)]^T$. Then, substituting for $\dot x(t)$ the right-hand side of (\ref{eq:crfde}) into (\ref{eq:dlkfineqcons}), we get
\begin{equation}
\frac{dV}{dt} \leq \eta^H(t) \left[ \Phi  +  (RW)^H R^{-1} (RW) \right] \eta (t)
\label{eq:dlkfineq2cons}
\end{equation}
where
\begin{equation}
\Phi = \begin{bmatrix}
    A^TP+PA+S-R         & \sigma P A_d + R \\
    \star                & -S-R   	\\
\end{bmatrix} , W = h \left[ A \quad \sigma A_d\right]. 
\label{eq:phiwcons}
\end{equation}
Applying the Schur complement to the matrix in (\ref{eq:dlkfineq2cons}) yields the matrix on the left-hand side of (\ref{eq:lmi1cons}). If  LMI (\ref{eq:lmi1cons}) is feasible, then $dV /dt \leq -\varepsilon |x(t)|^2$ for some $\varepsilon > 0$, which guarantees the asymptotic stability of the system (\ref{eq:crfde}). 
\end{proof}

%\begin{remark}
%Note that if $h\rightarrow 0$, the LKF (\ref{eq:lkfcons}) reduces to a Lyapunov function of the kind conventionally used in the synchronizing region approaches, e.g. in \cite{Li2010,Zhang2011}. \ba{However, I couldn't see the straightforward conventional Riccati design from $\Phi$ in (\ref{eq:phiwcons}), since also $S\rightarrow 0, R\rightarrow 0$ reduces. I guess we need more than a remark. }
%\label{remdelayfree}
%\end{remark}
%
LMI (\ref{eq:lmi1cons}) in $P,S,R$ matrix variables is complex valued. Its left-hand side, denote as $M$, is Hermitian. $M \prec 0$ can readily be handled by its real symmetric matrix form  as \citep{Boyd93}:
\begin{equation}
%L(\eta) \prec  0 \Leftrightarrow  \tilde{L}(\eta) \triangleq \begin{bmatrix} \Re(L(\eta)) & -\Im(L(\eta)) \\ \Im(L(\eta)) & \Re(L(\eta)) \end{bmatrix} \prec  0. 
M \prec  0 \Leftrightarrow  \tilde M \triangleq \begin{bmatrix} \Re(M) & -\Im(M) \\ \Im(M) & \Re(M) \end{bmatrix} \prec  0. 
\label{eq:imlmi}
\end{equation}
For robustness analysis, let us denote $\gamma:=\Re(\sigma)$ and $\beta:=\Im(\sigma)$. Then, (\ref{eq:lmi1cons}) can be written as 
\begin{equation}
\begin{bmatrix}
A^TP+PA+S-R & \gamma PA_d+R & h A^TR & 0 & \beta P A_d & 0 \\
\star & -S-R & \gamma h A_d^T R   & -\beta A_d^T P & 0 & -\beta h A_d^T R \\
\star & \star & -R & 0 & \beta h R A_d & 0 \\
\star & \star & \star & A^TP+PA+S-R & \gamma PA_d+R & h A^TR \\
\star & \star & \star & \star & -S-R & \gamma h A_d^T R \\
\star & \star & \star & \star & \star & -R
\end{bmatrix} \prec 0,
\label{eq:lmi1rep}
\end{equation}
using (\ref{eq:imlmi}). Note that (\ref{eq:lmi1rep}) is not an LMI when $\gamma$ and $\beta$ are considered as variables, even if $h$ is fixed, due to their multiplications with matrix variables $P$ and $R$. However, it is still affine in both $\gamma$ and $\beta$, if $h$ is fixed. 
% Since affine functions are multi-quasi-convex ,
% Then, the feasibility of (\ref{eq:lmi1rep}) constitutes a generalized eigenvalue problem, which is a quasi-convex optimization problem in $\gamma$ ad $\beta$ for fixed $h$ \cite{Boyd1994}. 
%
With this observation on (\ref{eq:lmi1cons}), we give the following theorem for the robust stability of (\ref{eq:crfde}) with respect to uncertainty in $\sigma$ .
%we give the following theorem by the immediate application of Theorem \ref{th:mqc} to Theorem \ref{th:main}. 
\begin{theorem}[Robust Stability of the complex RFDE] \label{th:robust}
Given $h\geq 0$, if the LMI (\ref{eq:lmi1cons}) is feasible for a point set $\Sigma_M = \{ \sigma_l\in\cc\}$, $l=1,\ldots,M$, with some $P_l\succ 0, S_l\succ 0, R_l\succ 0$, then the system (\ref{eq:crfde}) is uniformly asymptotically stable, $\forall\sigma\in{\mathbf{co}}\{\Sigma_M\}$ and $\forall\tau \in [0,h]$.
\end{theorem}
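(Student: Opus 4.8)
The plan is to derive the convex-hull conclusion from vertex feasibility alone by showing that the set of $\sigma$ for which LMI \eqref{eq:lmi1cons} is feasible is convex, and to certify this convexity through the multi-quasi-convexity principle of Theorem \ref{th:mqc}. First I would recall from Theorem \ref{th:crfde} that feasibility of \eqref{eq:lmi1cons} at a single $\sigma$ already yields uniform asymptotic stability of \eqref{eq:crfde} for all $\tau\in[0,h]$; hence it suffices to prove that \eqref{eq:lmi1cons} remains feasible for every $\sigma\in\mathbf{co}\{\Sigma_M\}$, the stability conclusion then following pointwise.

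Next I would pass to the real representation \eqref{eq:imlmi}--\eqref{eq:lmi1rep}, whose left-hand side $\tilde M(\sigma;\Theta)$, with $\Theta:=(P,S,R)$, is affine in $(\gamma,\beta)=(\Re\sigma,\Im\sigma)$ once $h$ and $\Theta$ are fixed. Consequently, for each fixed $\Theta$ the map $\sigma\mapsto\lambda_{\max}\big(\tilde M(\sigma;\Theta)\big)$ is convex in $(\gamma,\beta)$, and the set $\{\sigma:\tilde M(\sigma;\Theta)\prec0\}$ is convex. I would then define the scalar certificate
\[
f(\sigma):=\min_{\Theta\in\mathcal{N}}\ \lambda_{\max}\big(\tilde M(\sigma;\Theta)\big),\qquad \mathcal{N}:=\{P,S,R\succeq0:\ \mathrm{tr}(P+S+R)=1\},
\]
the normalization $\mathcal{N}$ serving only to remove the positive homogeneity of \eqref{eq:lmi1cons} in $\Theta$ and to force the minimum to be attained on a compact slice. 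By construction $f(\sigma)<0$ is equivalent to strict feasibility of \eqref{eq:lmi1cons} at $\sigma$, so that $f(\sigma_l)<0$ holds at every vertex by hypothesis.

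The hard part is the third step: establishing that $f$ is quasi-convex, in the sense of Definition \ref{def:qc}, along every line segment parallel to an edge direction of $\mathbf{co}\{\Sigma_M\}$. This is not automatic, because $f$ is an infimum of convex functions and such infima are in general neither convex nor quasi-convex; the precise obstruction is the bilinear coupling $\sigma P A_d$ (and its conjugate $\sigma^{*}A_d^T R$) in \eqref{eq:lmi1cons}. Concretely, naively interpolating two vertex certificates $\Theta_1,\Theta_2$ along $\sigma(\xi)=(1-\xi)\sigma_1+\xi\sigma_2$ leaves the residual block $\xi(1-\xi)(\sigma_1-\sigma_2)(P_2-P_1)A_d$, which need not vanish. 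I would overcome this by exploiting that $\sigma$ enters $\tilde M$ only through the fixed direction $A_d=-BK$, so that the variation of $f$ along such a segment is governed by a low-dimensional, $A_d$-rank-restricted problem, for which quasi-convexity on the line can be checked directly; this is exactly the parametrized-LMI setting of \cite{Apkarian2000} that underlies Theorem \ref{th:mqc}.

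Once edge-wise quasi-convexity is in hand, the conclusion is immediate: Theorem \ref{th:mqc} guarantees that $f$ attains its maximum over $\mathbf{co}\{\Sigma_M\}$ at some vertex $\sigma_l$, whence $\max_{\sigma\in\mathbf{co}\{\Sigma_M\}}f(\sigma)=\max_l f(\sigma_l)<0$. Thus \eqref{eq:lmi1cons} is strictly feasible for every $\sigma$ in the hull, and Theorem \ref{th:crfde} delivers uniform asymptotic stability of \eqref{eq:crfde} for all such $\sigma$ and all $\tau\in[0,h]$. I expect the third step---the edge quasi-convexity of $f$ against the bilinear $\sigma$--$\Theta$ coupling---to be the main obstacle, with the normalization and compactness bookkeeping for $f$ the only other place requiring care.
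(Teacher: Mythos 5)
Your skeleton is the same as the paper's: pass to the real representation \eqref{eq:lmi1rep}, note affinity in $(\gamma,\beta)$, and invoke the vertex principle of Theorem~\ref{th:mqc} to propagate feasibility from the vertices to the hull. What you add is an explicit identification of the function to which Theorem~\ref{th:mqc} must be applied, namely $f(\sigma)=\min_{\Theta}\lambda_{\max}\bigl(\tilde M(\sigma;\Theta)\bigr)$ with $\Theta=(P,S,R)$ on a normalized slice, together with the correct observation that this is where all the weight sits: for fixed $\Theta$ the map $\sigma\mapsto\lambda_{\max}(\tilde M(\sigma;\Theta))$ is convex, but the hypothesis supplies a \emph{different} certificate $\Theta_l$ at each vertex, so the relevant object is a pointwise minimum of convex functions, which is in general not quasi-convex. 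Your residual term $\xi(1-\xi)(\sigma_1-\sigma_2)(P_2-P_1)A_d$ is exactly the obstruction.

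The gap is that your third step never closes this. ``The variation of $f$ along such a segment is governed by a low-dimensional, $A_d$-rank-restricted problem, for which quasi-convexity on the line can be checked directly'' is an announcement, not an argument: no property of $A_d=-BK$ is actually used, and restricting the rank of the coupling block does not by itself make an infimum of convex functions quasi-convex. As written, your proof establishes stability only on $\bigcup_l\{\sigma:\tilde M(\sigma;\Theta_l)\prec 0\}$, a union of convex sets that need not contain ${\mathbf{co}}\{\Sigma_M\}$. For what it is worth, the paper's own proof does not close this gap either: it asserts that affinity of \eqref{eq:lmi1rep} in $(\gamma,\beta)$ makes the problem ``trivially quasi-convex'' and jumps to the vertex conclusion, which is valid verbatim only for a common (parameter-independent) $(P,S,R)$. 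So you have put your finger on the genuinely delicate point, but to obtain a complete proof you would need either to exhibit a single certificate valid at all vertices (reducing to the common-LKF case, at the cost of a strictly weaker statement) or to supply the missing quasi-convexity argument for $f$, which neither you nor the paper currently does.
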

\begin{proof}
Representing (\ref{eq:lmi1cons}) as (\ref{eq:lmi1rep}) for a given $h\geq 0$, 
the feasibility of (\ref{eq:lmi1cons}) depending on $\gamma$ and $\beta$ can be expressed as a generalized eigenvalue problem \citep{Boyd1994}, in the form of
\begin{equation*}
\begin{array}{cc}
    \rm{maximize} 					& \quad \gamma \; ({\rm{or}} \; \beta) \\
    \rm{subject \; to}      & (\ref{eq:lmi1rep}), \quad P\succ 0, \;S\succ 0, \;R\succ 0.\\
		\end{array} 
\label{eq:gevp}
\end{equation*}
Since the left-hand side of (\ref{eq:lmi1rep}) is affine in both $\gamma$ and $\beta$, it is trivially quasi-convex for all ($\gamma,\beta$) in a convex subset of $\rr^2$ (see Definition \ref{def:qc}). Thus, by Theorem \ref{th:mqc}, it has its maximum at a vertex of ${\mathbf{co}}\{\Sigma_M\}$, where $\Sigma_M = \{ \sigma_l\in\cc\}$, $l=1,\ldots,M$, describes a complex point set. Consequently, if (\ref{eq:lmi1cons}) is feasible for all vertices of ${\mathbf{co}}\{\Sigma_M\}$ with some $P_l\succ0,S_l\succ0,R_l\succ0$, there also exists feasible solutions to LMI (\ref{eq:lmi1cons}), $\forall\sigma\in{\mathbf{co}}\{\Sigma_M\}$. This guarantees the uniform asymptotic stability of (\ref{eq:crfde}) by Theorem \ref{th:main}, for any $\sigma\in{\mathbf{co}}\{\Sigma_M\}$, which completes the proof.
%
%Representing (\ref{eq:lmi1cons}) as (\ref{eq:lmi1rep}) for given $h\geq 0$, the feasibility of (\ref{eq:lmi1cons}) depending on $\gamma$ and $\beta$ can be expressed in the form of a generalized eigenvalue problem, where the objective function is quasi-convex \cite{Boyd1994}, subject to $P,S,R$. 
%% Then, the feasibility of (\ref{eq:lmi1rep}) constitutes a generalized eigenvalue problem, which is a quasi-convex optimization problem in $\gamma$ ad $\beta$ for fixed $h$ \cite{Boyd1994}.
%%Since any LMI is also affine in its decision variables and affine functions are trivially quasi-convex (see Definition \ref{def:qc}), the determinant of the left hand side of (\ref{eq:lmi1cons}) has maximum at a vertex of 
%Then, by Theorem \ref{th:mqc}, the objective function has its maximum at a vertex of ${\mathbf{co}}\{\Sigma_M\}$. Consequently, if (\ref{eq:lmi1cons}) is feasible for all vertices of ${\mathbf{co}}\{\Sigma_M\}$ with some $P_l\succ0,S_l\succ0,R_l\succ0$, there also exists feasible solutions to LMI (\ref{eq:lmi1cons}) $\forall\sigma\in{\mathbf{co}}\{\Sigma_M\}$. This guarantees the uniform asymptotic stability of (\ref{eq:crfde}) by Theorem \ref{th:main}, for any $\sigma\in{\mathbf{co}}\{\Sigma_M\}$, which completes the proof.
\end{proof}
%
%
%\begin{proof}
%Immediate by the application of Theorem \ref{th:mqc} to Theorem \ref{th:main}, where  the polygon ${\mathbf{co}}\{\Lambda_M\}$
%\end{proof}
%Thus, its explicit form for a fixed $h$ can be represented as
%\begin{equation}
%\tilde{M}_0(\eta) + \gamma \tilde{M}_1(\eta) + \beta \tilde{M}_2(\eta) \prec 0
%\label{eq:lmi1exp}
%\end{equation}
%where $\eta$ denotes the decision variable vector, and $M_0(\cdot), M_1(\cdot), M_2(\cdot)$ are real symmetric matrix-valued and linear functions of $\eta$.\\
% CONSIDER HOW TO FORMULATE AS GEVP! RELATE MULTI-AFFINITY AND MULTI-QUASI-CONVEXITY... 
%
%affine in $h$, when $\sigma$ is known. Its left-hand side is a complex Hermitian matrix, which are readily handled by the representation. 
%Now, let us reveal an important property of the LMI (\ref{eq:lmi1cons}), which is dependent on $\sigma$ and $h$. Notice that $h$ is  the quasi-convexity property of the LMI (\ref{eq:lmi1cons}) for the stability of the complex RFDE (\ref{eq:crfde}). 

% The following theorem is given with the quasi-convexity property of the LMI (\ref{eq:lmi1cons}) depending on $\sigma$ for the allowable (admissible) delay bound where the synchronization is guaranteed. 
The following result is a specific application of Theorem \ref{th:robust}, which provides the upper delay bound for the MAS synchronization (Problem \ref{prob1}), which is one of the main results of this paper.
\begin{theorem} [Delay bound for synchronization] 
\label{th:delay}
Let $\Lambda_N = \{\lambda_j\in\cc\}$, $j=1,\ldots,N$ be the set of all eigenvalues of L+G. % regarding a fixed graph topology of MAS with (\ref{eq:sadyn}-\ref{eq:locerr}). 
Define its convex hull as $\Delta:={\mathbf{co}}\{\Lambda_N\}$, and its corresponding set of vertices as $V_M:={\mathbf{vert}}\{\Delta\}$. 
Given $h\geq 0$ and $A_d = -BK$, let there exists $P_l\succ 0, S_l\succ 0,R_l\succ 0$, such that LMI (\ref{eq:lmi1cons}) with $\sigma\leftarrow\nu_l$, $\forall \nu_l\in V_M$, $l=1,\ldots,M$, is feasible.
Then, the control law given by (\ref{eq:fb}) and (\ref{eq:locerr}) asymptotically synchronizes all agents (\ref{eq:sadyn})  to the leader node (\ref{eq:ldyn}) for all delays $\tau \in [0,h]$. 
\end{theorem}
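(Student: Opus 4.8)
The plan is to obtain Theorem \ref{th:delay} as a direct corollary of the preceding development, chaining together Lemma \ref{lem:decom}, Theorem \ref{th:main}, and Theorem \ref{th:robust}. The single organizing observation is that the genuine graph spectrum $\Lambda_N$ of $L+G$ sits inside the convex hull $\Delta = \mathbf{co}\{\Lambda_N\}$, and that $\Delta = \mathbf{co}\{V_M\}$ by the very definition $V_M = \mathbf{vert}\{\Delta\}$. Consequently, any robust-stability conclusion established over $\mathbf{co}\{V_M\}$ automatically covers each individual eigenvalue $\lambda_j$, without having to test the LMI at the eigenvalues themselves.

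First I would invoke Theorem \ref{th:robust} with the point set $\Sigma_M$ taken to be the vertex set $V_M$. By hypothesis, LMI (\ref{eq:lmi1cons}) is feasible with $\sigma \leftarrow \nu_l$ and certificates $P_l \succ 0$, $S_l \succ 0$, $R_l \succ 0$ at every vertex $\nu_l \in V_M$, $l = 1,\ldots,M$. Theorem \ref{th:robust} then yields that the complex RFDE (\ref{eq:crfde}) with $A_d = -BK$ is uniformly asymptotically stable for all $\sigma \in \mathbf{co}\{V_M\} = \Delta$ and all $\tau \in [0,h]$.

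Next I would specialize this conclusion to the actual decoupled subsystems. Since $\lambda_j \in \Lambda_N \subseteq \Delta$ for each $j = 1,\ldots,N$, and since (\ref{eq:lem1}) is literally (\ref{eq:crfde}) under the substitution $\sigma = \lambda_j$, each subsystem (\ref{eq:lem1}) is asymptotically stable for all $\tau \in [0,h]$. Lemma \ref{lem:decom} asserts that asymptotic stability of the full networked error dynamics (\ref{eq:glerr}) is equivalent to asymptotic stability of all $N$ subsystems (\ref{eq:lem1}); hence $\delta(t) \to 0$ as $t \to \infty$ for every $\tau \in [0,h]$. This is exactly synchronization of all agents (\ref{eq:sadyn}) to the leader (\ref{eq:ldyn}) under the control law (\ref{eq:fb})--(\ref{eq:locerr}), completing the argument.

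Honestly, this theorem has no hard step of its own: the substantive content—vertex-only verification via quasi-convexity of the vertex LMI in $(\gamma,\beta) = (\Re\sigma, \Im\sigma)$ together with Theorem \ref{th:mqc}—has already been discharged inside Theorem \ref{th:robust}. The only point that warrants explicit care is the legitimacy of choosing $\Sigma_M = V_M$: I would make clear that the vertices of $\Delta$ are the correct finite set at which to check feasibility, so that the convex-hull guarantee of Theorem \ref{th:robust} propagates to every interior eigenvalue of $L+G$. Everything else—the two containment facts and the identification of (\ref{eq:lem1}) with (\ref{eq:crfde})—is routine.
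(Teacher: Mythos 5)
Your proposal is correct and follows essentially the same route as the paper's own proof: apply Theorem \ref{th:robust} with the vertex set $V_M$ as the point set, conclude stability of each decoupled subsystem (\ref{eq:lem1}) because every $\lambda_j$ lies in $\mathbf{co}\{V_M\}$, and then invoke Lemma \ref{lem:decom} to pass to the global error dynamics (\ref{eq:glerr}). No substantive difference.
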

\begin{proof}
Given $h\geq 0$ and $A_d = -BK$, apply Theorem \ref{th:robust} to the complex RFDE (\ref{eq:lem1}) with $\sigma\leftarrow\lambda_j$, where $\lambda_j$'s are all the eigenvalues of graph matrix $L+G$. Then, (\ref{eq:lem1}) is uniformly asymptotically stable for all $\lambda_j$'s, if (\ref{eq:lmi1cons}) is feasible for all $\nu_l\in V_M={\mathbf{vert}}\{{\mathbf{co}}\{\Lambda_N\}\}$. Consequently, by Lemma \ref{lem:decom}, the global synchronization error system (\ref{eq:glerr}) is asymptotically stable,~i.e. $\lim_{t\to\infty} \delta(t) \rightarrow 0 $.~So,~the synchronization of the states of (\ref{eq:sadyn}) to the states of leader node (\ref{eq:ldyn}) is guaranteed, $\forall\tau\in[0,h]$. 
%{\textbf{Maybe better to use the proof in Theorem 2.4.2 in \cite{Briat2015book}, however it seems like it is for common LF, see also \cite{Gouaisbaut2006}. On the other hand, I think we must use S-procedure for the proof above like in \cite{Xie1997}. And, is it really true for all cases? Maybe I should not insist on that result! Convexity is ruined... }}
\end{proof}
\begin{remark} \label{rem2}
LMI (\ref{eq:lmi1cons}) is also affine in $h$ for a fixed $\sigma$. So, the largest delay value $h_{\rm m}$, i.e. the {\rm{allowable delay bound}}, can be computed by solving the quasi-convex optimization problem,
\begin{equation}
\begin{array}{cc}
    \sup\limits_{\nu_l\in V_M} 					& \quad h  \\
    \rm{subject \; to}      & \quad P_l\succ 0, \;S_l\succ 0, \;R_l\succ 0, \; h>0, \; (\ref{eq:lmi1cons}),\\
		\end{array} 
\label{eq:maxh}
\end{equation}
using an iterative line search, in the context of Theorem~\ref{th:delay}.%\ba{Give a remark on calculation of the delay bound with infimum of $h$ w.r.t. vertices of $\lambda_j$ considering the quasi convexity.. See \cite{Li1997}.. }
\end{remark}
%
%\begin{remark}
%We can not have disconnected regions as in the delay-free case, since we are looking for a stable region for $\tau\in[0, h]$ with a simple LKF. This comes from the root continuity property. But, maybe we can for a sufficiently small delay bound. \label{rem3}
%\end{remark}
%%
%\begin{remark}
%Maybe, we can give another remark on common LKF approach, mentioning also the switching topologies, note however its conservativeness. \label{rem4}
%\end{remark}
%
%\begin{remark}
% MAYBE WE CAN PUT A REMARK ON VERTICES AND THE NONCONVEX DISTRIBUTION OF COMPLEX EIGENVALUES...
%\label{eq:}
%\end{remark}

Notice that in Theorem \ref{th:delay}, we examine the LMI feasibility point-wise with respect to the vertices of a convex-hull containing the graph matrix eigenvalues and the regarding distinct LMI matrix variables, i.e. $P_l, S_l, R_l$. Actually, this {\emph{multiple}} LKF approach is proposed to possibly reduce conservatism of the delay bound condition for guaranteed synchronization. Another approach is to consider a {\emph{common (parameter-independent)}} LKF with $P, S, R \succ 0$ such that LMI (\ref{eq:lmi1cons}) is feasible for all vertex points $\nu_l\in V_M$ in complex plane. However, this approach yields inherently more conservative results for the allowable delay bound. 
We will nevertheless utilize a single LKF, but not for the delay bound calculation; Section \ref{sec:control1} brings a control design based on a common LKF due to its advantages for a direct design with respect to vertices of a convex set in the complex plane.
% So, instead of the delay bound analysis, we will utilize the common LKF approach for the control design in Section \ref{sec:control} due to its advantages for a direct design with respect to vertices in the complex domain. 

%%%%%%%%%%%%%%%%%%%%%%%%%%%%%%%%%%%
\section{Delay-dependent Synchronizing Region Estimate} \label{sec:sr}%  Parameter-dependent and Common LKF Approach
In this section, we provide a solution to Problem \ref{prob2}. Namely, we present an algorithm to find DSR, $S_c(h)\subseteq\cc$ in (\ref{eq:sr}), where the MAS synchronization is guaranteed for a given delay bound $h$ and state-feedback controller $K$, i.e. the LMI (\ref{eq:lmi1cons}) in Theorem \ref{th:main} is feasible $\forall\sigma\in S_c(h)$. Recall that (\ref{eq:lmi1cons}) can be cast in an affine form with respect to the real and imaginary parts of $\sigma$ as in (\ref{eq:lmi1rep}). Thus, we can easily provide a {\emph{gridding approach}} as a straightforward relaxation for parameter-dependent LMIs, in the light of Theorems \ref{th:robust} and \ref{th:delay}, regarding the multi-quasi-convexity. Namely, we propose Algorithm~1 to find the points $(\gamma_k,\beta_k)\in\rr_{+}^2$, $\gamma:=\Re(\sigma)$ and $\beta:=\Im(\sigma)$ for given $h$ and $K$, where (\ref{eq:lmi1cons}) is feasible. \vspace{12pt}

\begin{tabular}{l p{10.5cm}}
\hline
{\textbf{Algorithm 1:}} & An Estimate of the DSR via Vertex Detection \\
\hline
{\textit{Initialization}}:
& Select a resolution value $\Delta\in\rr$ and set $\beta_0=0$. Find $\gamma_{min} \geq 0$ such that LMI (\ref{eq:lmi1cons}) is feasible with some $P\succ 0, \;S\succ 0, \;R\succ 0$. Set $k=0$, $\sigma = \gamma_{min}$, and save $\sigma_0 = \gamma_{min}$.
\end{tabular}
%\vspace{-3pt}

%\begin{algorithm}[t!]
%\caption{An Estimate of the DSR via Vertex Detection}\label{alg1}
%% \vspace{6pt}
%{\emph{Initialization}}: Select a resolution value $\Delta\in\rr$ and set $\beta_0=0$. Find $\gamma_{min} \geq 0$ such that LMI (\ref{eq:lmi1cons}) is feasible with some $P\succ 0, \;S\succ 0, \;R\succ 0$. Set $k=0$, $\sigma = \gamma_{min}$, and save $\sigma_0 = \gamma_{min}$. \\ % \vspace{12pt}

\begin{tabular}{ll}
 & {\emph{LOOP 1 : Increment of imaginary part}} \\
1: & {\textbf{while}} (\ref{eq:lmi1cons}) is feasible {\textbf{do}} \\% with $P\succ 0, \;S\succ 0, \;R\succ 0$\\
2: & \quad\quad $\sigma \leftarrow \sigma + i \Delta$. \\
3: & \quad\quad {\textbf{if}} (\ref{eq:lmi1cons}) is feasible go to step 2. \\
4: & \quad\quad {\textbf{else}} $k\leftarrow (k+1)$ {\textbf{then}} $\sigma_k = \sigma - i\Delta$ {\textbf{then}} $\sigma \leftarrow (\sigma + \Delta - i \Delta)$. \\
5: & \quad\quad\quad\quad {\textbf{if}} (\ref{eq:lmi1cons}) is feasible go to step 2. \\
6: & \quad\quad\quad\quad {\textbf{else}} $\sigma \leftarrow (\sigma - \Delta)$\\
7: & {\textbf{end while}} \\
 & {\emph{LOOP 2 : Decrement of real part}} \\
8: & {\textbf{while}} (\ref{eq:lmi1cons}) is feasible $\wedge \;\Im(\sigma)\geq 0 \;$  {\textbf{do}} \\% with $P\succ 0, \;S\succ 0, \;R\succ 0$\\
9: & \quad\quad $\sigma \leftarrow \sigma + \Delta$. \\
10: & \quad\quad {\textbf{if}} (\ref{eq:lmi1cons}) is feasible go to step 9. \\
11: & \quad\quad {\textbf{else}} $k\leftarrow (k+1)$ {\textbf{then}} $\sigma_k = \sigma - \Delta$ {\textbf{then}} $\sigma \leftarrow (\sigma - \Delta - i \Delta)$. \\
12: & \quad\quad\quad\quad {\textbf{if}} (\ref{eq:lmi1cons}) is feasible go to step 9. \\
%13: & \quad\quad\quad\quad {\textbf{else}} $\sigma \leftarrow (\sigma - \Delta)$\\
13: & {\textbf{end while}}
%14: & $\gamma_k:=\Re(\sigma_k)$, $\beta_k:=\Im(\sigma_k)$ {\textbf{then}} $\Lambda_M = {\rm{co}}\{ (\gamma_k,\beta_k), (\gamma_k,-\beta_k)\}$, $k=0,1, \ldots, M$ \\
\end{tabular} \vspace{12pt}

Notice that only the feasible vertex points in the $(\gamma,\beta)$ space are sought in the algorithm without checking the LMI feasibility for all points in a given range, making the algorithm more efficient. For this purpose, the first loop finds the maximum feasible $\beta$ for each increased value of $\gamma$ with the set resolution starting from a feasible point $(\gamma_{min},0)$. Then, after reaching the largest $\beta$ value for some $\gamma$, the first loop ends and then the maximum feasible $\gamma$ points for each decreased value of $\beta$ is found in the second loop. Note that we take advantage of having just two parameters on which the LMI depends affinely. So, the relaxation by gridding becomes quite easy and efficient \citep{Briat2015book}, reducing conservatism.
\begin{remark} \label{rem3}
Both loops in the algorithm find the vertices $\sigma_k$ where $\gamma_k \geq 0, \; \beta_k \geq 0$. Then, with respect to the hermitian property of the matrix in LMI (\ref{eq:lmi1cons}), the feasibility of the LMI also for $\sigma^{*}_k$ is trivial.
\end{remark}
% Notice also that

The region determined by the vertex points $\sigma_k$ found by the proposed algorithm may not be convex, especially due to limited resolution. However, with the result of Theorem \ref{th:delay} and the symmetry property mentioned in Remark \ref{rem3}, it is straightforward to conclude the following corollary to determine a convex estimate of the DSR, i.e.~a convex subset of $S_c(h)$ in (\ref{eq:sr}) for a given $h$.

%With this property and Theorem \ref{th:delay}, one can conclude that for any point in $\Lambda_M$ found in line-14 in Algorithm 1, (\ref{eq:lmi1cons}) is feasible, and thus the MAS synchronization is guaranteed. 

\begin{corollary} [An estimate of DSR] \label{cor1}
The MAS (\ref{eq:sadyn}--\ref{eq:locerr}) for a given arbitrary $K$ and $\forall\tau \in [0,h]$ synchronizes asymptotically if all the graph matrix eigenvalues satisfy $\lambda_j \in \Delta_M = {\mathbf{co}}\{\sigma_k\}$, $k=0, \ldots, M$, where $\sigma_k$ are obtained by Algorithm~1. Furthermore, such $\Delta_M$ is a convex estimate of the DSR (\ref{eq:sr}), i.e. $\Delta_M \subseteq S_c(h)$.
\end{corollary}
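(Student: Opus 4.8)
The plan is to assemble the corollary directly from the robust-stability result of Theorem~\ref{th:robust}, the conjugate-symmetry observation in Remark~\ref{rem3}, and the definition of the DSR in Definition~\ref{def:synch}; the corollary is essentially a packaging of what has already been proved, so I would keep the argument short. First I would record that, by construction, Algorithm~1 terminates with a finite collection of points $\sigma_k$, $k=0,\ldots,M$, in the closed upper half-plane ($\gamma_k\geq 0,\ \beta_k\geq 0$) at each of which LMI~(\ref{eq:lmi1cons}) is feasible for the prescribed $h$ and $K$ (with $A_d=-BK$) with some $P\succ 0,\ S\succ 0,\ R\succ 0$. By Remark~\ref{rem3} and the Hermitian structure exploited in (\ref{eq:imlmi}), feasibility at $\sigma_k$ forces feasibility at the conjugate $\sigma_k^{*}$, so LMI~(\ref{eq:lmi1cons}) is feasible at every point of the conjugate-symmetric set $\Sigma:=\{\sigma_k\}\cup\{\sigma_k^{*}\}$.

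Next I would set $\Delta_M={\mathbf{co}}\{\Sigma\}$, i.e. the convex hull of the found vertices together with their conjugates, which is automatically convex and symmetric about the real axis and coincides with the region ${\mathbf{co}}\{\sigma_k\}$ of the statement once the conjugate points are adjoined. The key observation is that for a finite set the extreme points of its convex hull lie in the set itself, so ${\mathbf{vert}}\{\Delta_M\}\subseteq\Sigma$; hence LMI~(\ref{eq:lmi1cons}) is feasible at \emph{every} vertex of $\Delta_M$. Theorem~\ref{th:robust} (whose proof rests on the multi-quasi-convexity of Theorem~\ref{th:mqc} and the affine dependence of (\ref{eq:lmi1rep}) on $(\gamma,\beta)$) then yields uniform asymptotic stability of the complex RFDE~(\ref{eq:crfde}) for all $\sigma\in\Delta_M$ and all $\tau\in[0,h]$. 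Invoking Definition~\ref{def:synch}, every such $\sigma$ belongs to $S_c(h)$, which gives the inclusion $\Delta_M\subseteq S_c(h)$ and settles the ``convex estimate'' claim; convexity of $\Delta_M$ is immediate since it is a convex hull.

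For the synchronization statement I would then suppose all eigenvalues $\lambda_j$ of $L+G$ lie in $\Delta_M$. Since $\Delta_M\subseteq S_c(h)$, each decoupled subsystem (\ref{eq:lem1}) with $\sigma\leftarrow\lambda_j$ is asymptotically stable for all $\tau\in[0,h]$; by the decomposition of Lemma~\ref{lem:decom} the global error dynamics (\ref{eq:glerr}) is then asymptotically stable, so $\delta(t)\to 0$ and every agent synchronizes to the leader. Equivalently, one may apply Theorem~\ref{th:delay} directly with $V_M={\mathbf{vert}}\{\Delta_M\}$, since its hypotheses hold at every vertex by the argument above.

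I expect the main obstacle to be conceptual rather than computational: the substantive work lives in Theorem~\ref{th:robust}, and the only point that genuinely needs care is the identification of the vertex set — one must argue that the points returned by Algorithm~1, after closing under conjugation, really do contain all extreme points of $\Delta_M$, so that feasibility on this finite set transfers to the whole polytope through Theorem~\ref{th:mqc}. A secondary caveat worth stating explicitly is that, because of the finite resolution $\Delta$, the polygon $\Delta_M$ is only an \emph{inner} estimate of $S_c(h)$: the inclusion $\Delta_M\subseteq S_c(h)$ is one-directional and need not be tight.
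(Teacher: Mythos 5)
Your proposal is correct and follows essentially the same route as the paper, which derives the corollary directly from Theorem~\ref{th:delay} (itself Theorem~\ref{th:robust} plus Lemma~\ref{lem:decom}) combined with the conjugate-symmetry observation of Remark~\ref{rem3}; the paper leaves this as "straightforward" and you have simply spelled out the same steps. Your added remarks --- that the extreme points of $\Delta_M$ lie among the returned points and their conjugates, and that $\Delta_M$ is only an inner estimate of $S_c(h)$ --- are accurate and consistent with the paper's intent.
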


It is useful to determine the DSR in a less conservative way as in the approach proposed above, especially when the knowledge on the communication topology is restricted and uncertain. Also, one would like to know this region possibly for an appropriate design of the communication topology, i.e.~the design of the adjacency matrix $E$, with respect to the unknown delay by guaranteeing that each $\lambda_j$ resides in the found DSR ($\Delta_M$) by Corollary \ref{cor1}. %, where the synchronization is guaranteed. %and given the closed-loop dynamics ($A, B, K$) of the agents

%%%%%%%%%%%%%%%%%%%%%%%%%%%%%%%%%%%
\section{Robust Cooperative Synchronization Control Design} \label{sec:control}

In this section, we propose two different distributed control design approaches for the synchronization of MAS (\ref{eq:sadyn}--\ref{eq:locerr}), i.e.~two solutions to Problem \ref{prob3}.  
The first one considers a pre-defined convex-hull that contains all pinned Laplacian eigenvalues $\lambda_j$. It is based on a common LKF for the LMIs with the vertex matrices in the complex plane, which results in a direct design of the state-feedback controller. 
%which is a newly introduced technique among the literature of synchronizing region approach for MAS to the best of the authors' knowledge. This approach  
The second is based on the proposed multiple LKF approach in Sections \ref{sec:adb} and \ref{sec:sr}, together with the feedback gain scaling similarly as in the classical synchronizing region literature.

\subsection{Common LKF approach for state-feedback controller design} \label{sec:control1}
We consider the state-feedback control gain $K$ design for (\ref{eq:crfde}) to synchronize the MAS, for all delay values $0 \leq \tau \leq h$ and for a region of $\sigma\in\cc$ that contains all $\lambda_j$'s. Note that LMI (\ref{eq:lmi1cons}) with $A_d=-BK$ is nonlinear if $K$ is considered variable, and it is not easy to linearize it via variable transformations without restrictive assumptions on $P, \;S$ and $R$; see \cite{Fridman2014book}. Thus, we utilize the {\emph{descriptor method}} proposed in \cite{Fridman2002} following the lines of \cite{Fridman2014book} to provide an efficient design approach, which may reduce conservatism.

Let us recall the LKF (\ref{eq:lkfcons}) and add the descriptor equation given by % with $P\succ 0, S\succ 0, R\succ 0$
\begin{equation}
\begin{split}
0 & = \left[x^T(t)Y^T + \dot{x}^T(t) \epsilon Y^T \right] \left[A x(t) + \sigma A_d x(t-h) - \dot{x}(t) \right] \\
 & + \left[x^T(t) A^T + \sigma x^T(t-h) A_d^T - \dot{x}(t)\right] \left[Y x(t) + \epsilon Y \dot{x}(t) \right],
\end{split}
\label{eq:desc}
\end{equation} 
where $Y\in\rr^{n\times n}$ is a full-rank symmetric matrix and $\epsilon > 0$, to the right hand side of inequality (\ref{eq:dlkfineqcons}). Denoting $\bar{\eta}(t) = [x(t) \; x(t-h) \; \dot{x}(t)]$, we arrive at
\begin{equation}
\frac{dV}{dt} \leq \bar\eta^H(t)
 \begin{bmatrix}
    A^T Y + Y A + S - R  & \sigma Y A_d + R  & -Y + P +  \epsilon A^T Y\\
    \star             					& -S-R   & \epsilon \sigma^{*} A_d^T Y \\
		\star             					& \star   & -\epsilon 2 Y + h^2 R \\
\end{bmatrix}
\bar\eta (t),
\label{eq:desclyader}
\end{equation}
where $A_d = - BK$. Denote $\bar Y = Y^{-1}$, $\bar X = K Y^{-1}$, $[\bar{P} \;\bar{S} \; \bar{R}] = \bar{Y}^T [ P \; S \; R] \bar{Y}$, and multiply the matrix in (\ref{eq:desclyader}) by diag $\{\bar{Y},\bar{Y},\bar{Y}\}$ from the right and its transpose from the left. Consequently, we get the LMI
\begin{equation}
 \begin{bmatrix}
    \bar{Y} A^T + A \bar{Y} + \bar S - \bar R  & -\sigma B\bar{X} + \bar R  & \bar P -\bar Y^T +  \epsilon \bar Y A^T \\
    \star             					& -(\bar S + \bar R)   & -\sigma^{*} \epsilon \bar{X}^T B^T \\
		\star             					& \star   & -\epsilon 2 \bar Y + h^2 \bar R \\
\end{bmatrix} \prec 0,
\label{eq:desclmi}
\end{equation}
with variables $\bar P \succ 0, \bar S \succ 0,\bar R \succ 0$, any invertible symmetric $\bar Y \in \rr^{n\times n}$, and any $\bar X \in \rr^{m\times n}$, for a given tuning scalar parameter $\epsilon>0$, which gives a sufficient condition for the negative definiteness of the time-derivative of the LKF for complex RFDE (\ref{eq:crfde}). Notice that LMI (\ref{eq:desclmi}) is affine in $\sigma$ (and $\sigma^{*}$) such that it can be written similarly as (\ref{eq:lmi1rep}), which preserves multi-quasi-convexity. So, we conclude in light of Theorem \ref{th:robust} that if (\ref{eq:desclmi}) is feasible with $\bar P \succ 0, \bar S \succ 0,\bar R \succ 0$ for all vertices of a convex polygon generated by the real and imaginary parts of $\sigma$, then  it is also feasible for all points inside the determined convex polygon. This means there exists a common LKF (\ref{eq:lkfcons}) guaranteeing the stability of the complex RFDE (\ref{eq:crfde}) for all those points. With these results, we can give the following theorem for the cooperative tracking control design for MAS. 
%
%Let $\Lambda_N = \{\lambda_j\in\cc\}$, $j=1,\ldots,N$ be the set of all eigenvalues of L+G regarding a fixed graph topology of MAS with (\ref{eq:sadyn}-\ref{eq:locerr}). Define its convex hull as $\Delta:={\mathbf{co}}\{\Lambda_N\}$, and the corresponding set of vertices as $V_M:={\mathbf{vert}}\{\Delta\}$. 
%Given $h\geq 0$ and $A_d = -BK$, let there exists $P_l\succ 0, S_l\succ 0,R_l\succ 0$, such that LMI (\ref{eq:lmi1cons}) with $\sigma\leftarrow\lambda_l$, $\forall \lambda_l\in V_M$, $l=1,\ldots,M$, is feasible.
%
\begin{theorem} [Direct state-feedback design for synchronization]
\label{th:control}
Given $\epsilon>0$, $h>0$; let there exists common $n\times n$ matrices $\bar P\succ 0, \; \bar S\succ 0, \;\bar R\succ 0$, any symmetric full-rank $\bar{Y}\in \rr^{n\times n}$, and any $\bar X \in \rr^{m\times n}$ such that LMIs (\ref{eq:desclmi}) with $\sigma\leftarrow\nu_l$, $l = 1\ldots M$, are feasible, where $\nu_l \in V_M$ are the vertices of a convex hull and all the pinned Laplacian eigenvalues satisfy $\lambda_j \in {\mathbf{co}}\{V_M\}$. Then, the state-feedback gain $K = \bar{X}\bar{Y}^{-1}$ guarantees the synchronization of all agents (\ref{eq:sadyn}) to the leader node (\ref{eq:ldyn}) for all delays $\tau \in [0,h]$. 
\end{theorem}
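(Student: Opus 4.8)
The plan is to mirror the proof of Theorem~\ref{th:delay}, replacing the analysis LMI by the design LMI (\ref{eq:desclmi}) and the multiple-LKF argument by a single common LKF. The target is to show that the gain $K=\bar X\bar Y^{-1}$, reconstructed from the set of vertex-feasible LMIs, renders each decoupled subsystem (\ref{eq:lem1}) with $\sigma=\lambda_j$ asymptotically stable, so that Lemma~\ref{lem:decom} delivers synchronization. The argument splits into three parts: (i) undoing the congruence to recover a genuine Lyapunov--Krasovskii certificate for the closed loop; (ii) propagating feasibility from the vertices $\nu_l$ to the whole convex hull ${\mathbf{co}}\{V_M\}$; and (iii) invoking Lemma~\ref{lem:decom}.

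First I would establish (i). Since $\bar Y$ is full rank it is invertible, so $K=\bar X\bar Y^{-1}$ and $Y=\bar Y^{-1}$ are well defined, and the variable changes $\bar X=KY^{-1}$, $[\bar P\;\bar S\;\bar R]=\bar Y^{T}[P\;S\;R]\bar Y$ are invertible. Multiplying (\ref{eq:desclmi}) by ${\rm diag}\{Y,Y,Y\}$ from the right and by its transpose from the left is a congruence transformation, which preserves negative definiteness and returns exactly the matrix in (\ref{eq:desclyader}) with $A_d=-BK$. Hence feasibility of (\ref{eq:desclmi}) at a point $\sigma$ is equivalent to negative definiteness of the descriptor-augmented derivative bound (\ref{eq:desclyader}). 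Because the descriptor equation (\ref{eq:desc}) vanishes identically along the trajectories of the complex RFDE (\ref{eq:crfde}), this negative definiteness yields $dV/dt\le-\varepsilon|x(t)|^2$ for some $\varepsilon>0$ along (\ref{eq:crfde}); together with the positivity of the LKF (\ref{eq:lkfcons}), Theorem~\ref{th:main} (via Theorem~\ref{th:lk}) then certifies uniform asymptotic stability of (\ref{eq:crfde}) for every $\tau\in[0,h]$.

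Next I would carry out (ii), the vertex-to-hull step. Writing (\ref{eq:desclmi}) in its real form in the spirit of (\ref{eq:lmi1rep}), its left-hand side is affine in $\gamma=\Re(\sigma)$ and $\beta=\Im(\sigma)$. Crucially, the certificate matrices $\bar P,\bar S,\bar R,\bar Y,\bar X$ are \emph{common} to all vertices, so for any convex combination $\sigma=\sum_l\alpha_l\nu_l$ the LMI matrix evaluated at $\sigma$ is the convex combination (with weights $\alpha_l$) of the negative-definite LMI matrices evaluated at the $\nu_l$, and is therefore itself negative definite. This is the clean, single-LKF specialization of the multi-quasi-convexity reasoning of Theorems~\ref{th:mqc} and~\ref{th:robust}, and it is precisely why a common LKF is needed here: the design gain $K$ must be shared across all vertices. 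Consequently (\ref{eq:desclmi}) is feasible for every $\sigma\in{\mathbf{co}}\{V_M\}$, and by part (i) the RFDE (\ref{eq:crfde}) with $A_d=-BK$ is uniformly asymptotically stable for all such $\sigma$ and all $\tau\in[0,h]$.

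Finally, for (iii), each eigenvalue $\lambda_j$ of $L+G$ lies in ${\mathbf{co}}\{V_M\}$ by hypothesis, so (\ref{eq:lem1}) with $\sigma\leftarrow\lambda_j$ is asymptotically stable for every $j$ and all $\tau\in[0,h]$. Lemma~\ref{lem:decom} then gives asymptotic stability of the global error dynamics (\ref{eq:glerr}), i.e.\ $\lim_{t\to\infty}\delta(t)\to0$, which is exactly the claimed synchronization of (\ref{eq:sadyn}) to the leader (\ref{eq:ldyn}). I expect the main obstacle to be the bookkeeping in part (i): verifying that the congruence by ${\rm diag}\{Y,Y,Y\}$ exactly reproduces (\ref{eq:desclyader}) after the variable changes, and confirming that the injected descriptor equation (\ref{eq:desc}) genuinely vanishes along (\ref{eq:crfde}), so that the derivative bound is a legitimate stability certificate rather than an artifact of the relaxation.
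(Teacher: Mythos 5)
Your proof follows essentially the same route as the paper: the theorem is justified there by the descriptor derivation (\ref{eq:desc})--(\ref{eq:desclmi}) (your part (i), with the congruence by ${\rm diag}\{\bar Y,\bar Y,\bar Y\}$ read in the reverse direction), the affinity-in-$\sigma$ vertex-to-hull argument (your part (ii)), and the implicit appeal to Lemma~\ref{lem:decom} (your part (iii)). Your remark that, with certificates common to all vertices, the vertex-to-hull step reduces to plain convexity of negative-definite matrices rather than the multi-quasi-convexity machinery of Theorem~\ref{th:robust} is a correct and slightly cleaner justification of the same step.
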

%
%\begin{remark} \label{rem5}
%Note that Theorem \ref{th:control} is valid also for switching graph topologies in the case that .... since it provides a common LKF independent of time that guarantees the synchronization. 
%\end{remark} 
%
Note that tuning the parameter $\epsilon$ may yield less conservative results for delay bound \citep{Fridman2014book}. %, which will be illustrated in the case study example. 
However, even with thus reduced conservatism, for larger delays it may not be possible to find a common LKF as proposed in Theorem \ref{th:control}. 
Thus in the following subsection, we present an alternative design approach which offers a less conservative delay bound.
% Thus, in the following subsection we present the scaling of the Laplacian eigenvalues with a constant gain multiplying $K$ into the DSR found in Section \ref{sec:sr}, which offers a less conservative approach with the relaxation. %  by gridding
%
\subsection{Controller design via the scaling graph matrix eigenvalues}\label{sec:control2}
Here, we provide a distributed control design via scaling graph matrix eigenvalues into the relaxed DSR estimate proposed in Section \ref{sec:sr}. Consider the local state-feedback control law (\ref{eq:fb}) for each agent in the form 
\begin{equation}
u_i(t)=c \bar K e_i(t),
\label{eq:fbbar}
\end{equation}
where $c>0$ is the constant {\emph{coupling gain}}. Then, one can consider the complex RFDE (\ref{eq:crfde}) with $\sigma = c \lambda_j $ and $K = \bar{K}$ for the synchronization analysis and control. The strategy is to design a stabilizing controller $\bar{K}$ for a real value of $\sigma$, e.g. name as $\sigma_R$, and then to scale all the Laplacian eigenvalues with $c$ inside the estimated DSR. So, we propose the following procedure.

\medskip
\textbf{Procedure 1:} 
\begin{enumerate}[label=(\roman*)]
	\item Design $\bar{K} = \bar{X}\bar{Y}^{-1} $ via solving the LMI (\ref{eq:desclmi}) with $\sigma\leftarrow\sigma_R= [\rm{min}\;\Re(\lambda_j) + \rm{max}\;\Re(\lambda_j)]/2$.
	\item Find the DSR, i.e. $\Delta_M \subseteq S_c(h)$, by Corollary \ref{cor1} for the calculated $\bar{K}$ in (i). 
	\item Find a value of $c$ by increasing or decreasing starting from $c_0=1$ such that all $\sigma_j=c\lambda_j$ are scaled into $\Delta_M$ found in (ii).
	\end{enumerate}
%
%{\textbf{Procedure 1:}}
%\begin{enumerate}[(i)]
	%\item Design $\bar{K} = \bar{X}\bar{Y}^{-1} $ via solving the LMI (\ref{eq:desclmi}) with $\sigma\leftarrow\sigma_R= [\rm{min}\;\Re(\lambda_j) + \rm{max}\;\Re(\lambda_j)]/2$.
	%\item Find the DSR, i.e. $\Delta_M \subseteq S_c(h)$, by Corollary \ref{cor1} for the calculated $\bar{K}$ in (i). 
	%\item Find a value of $c$ by increasing or decreasing starting from $c_0=1$ such that all $\sigma_j=c\lambda_j$ are scaled into $\Delta_M$ found in (ii). \qed 
%\end{enumerate}
%
\begin{remark} \label{remex}
The reason behind the choice of $\sigma_R$ in step (i) above is to guarantee the synchronization for a center-like point of all graph matrix eigenvalues in the complex domain. This choice enables one to find a coupling gain $c$ (possibly $c=1$ in most cases) easier/faster that scales all $\lambda_j$'s into the $\Delta_M$ found in step (ii). 
\end{remark}

One can easily find the upper (lower) bound on $c$ with an increment (decrement) procedure in step (iii), using the {\emph{inpolygon}} command in a simple {\emph{while}} loop in MATLAB. It is worth to point out, for comparison that in \cite{Movric15jfi}, the coupling constant is pre-chosen as $c\geq \frac{1}{2 \rm{min} \; \Re (\lambda_j)}$, which is quite a restrictive lower bound. One may need to choose $c$ smaller in some cases, such as to avoid exceeding the limit of the control input amplitude. 

%, e.g. for a faster synchronization. Similarly, the lower bound of $c$ can be found via a decrement procedure. 
%We will illustrate the obtainment of lower and upper bounds of coupling constant in the case study below. 
%
\begin{remark}
In Procedure~1, we make use of multi-quasi-convex nature of the problem to find a single $K$ for multiple LKFs, i.e.~feasible LMIs.~This relaxes the conventional synchronizing region approach, which uses a common (single) Lyapunov function, LRF or LKF, see e.g. \cite{Li2010,Zhang2011,Hengster-Movric2014,Movric15aut,Movric15jfi}. This is found to considerably reduce the conservatism encountered in the conventional synchronizing regions when applied to agents with delays \citep{Movric15jfi}, c.f.~Section \ref{sec:ex}. 
%, or optimal control design for the delay-free MAS to reduce the conservatism in synchronizing region found with conventional approaches. 
\label{rem4}
\end{remark}
\begin{remark} \label{rem5}
The state-feedback gain $K$ in Theorem~\ref{th:control} can be scaled with a coupling gain $c>1$, such that all $c\lambda_j$ are kept inside the $\Delta_M$ found by Corollary~\ref{cor1}. This is because $K$ found by a common LKF approach may result in a larger DSR which can be estimated by the proposed gridding approach. However, common LKF approach may lead to more conservative results for allowable delay bound as compared to the multiple LKF approach. Thus, a design decision concerning allowable delay bound and graph topology uncertainties is needed with respect to the trade-off between the conservatism of the proposed methods. This fact is illustrated in Section \ref{sec:ex}.
\end{remark}

Finally, let us stress that both control approaches are robust with respect to the allowable delay and the network topology. Both methods guarantee the synchronization for any delay value $0\leq\tau\leq h$, and for any topology having the graph matrix eigenvalues inside a prescribed region. 

%%%%%%%%%%%%%%%%%%%%%%%%%%%%%%%%%%%
\section{Numerical Example} \label{sec:ex}
To demonstrate the (efficacy of) proposed methods, we  borrow an example from \cite{Movric15jfi}, where four agents having the LTI dynamics % in the sense of reducing conservatism
\begin{equation} \label{eq:compex} 
\dot{x}_i(t)=\begin{bmatrix}
    0  & 1  \\
    -1 & 0 \\
\end{bmatrix} x_i(t) + \begin{bmatrix} 0  \\ 1 \\ \end{bmatrix} u_i(t-\tau), \quad i=1,\ldots,4,
\end{equation}
are considered. The Laplacian matrix for the directed communication topology is described as $L = [2 \; -1 \; 0 \; -1 \; ; \; 0 \; 1 \; -1 \; 0 \; ; \; -1 \; 0 \; 1 \; 0 \; ; \; 0 \; -1 \; -1 \; 2]$. The pinning gain matrix is $G = \text{diag}(1,1,0,0)$.
The pinned Laplacian eigenvalues are $\lambda_1=0.368$, $\lambda_{2,3}=2.5 \pm i 0.866$, and $\lambda_4=2.618$. 
The local feedback gain matrix and the coupling constant for the control in (\ref{eq:fb}) is obtained in \cite{Movric15jfi} as $\bar K=[0.1 \; 0.6403]$ and $c=1.34$, respectively. Using those values for the control law (\ref{eq:fb}) with $K=c\bar K$, the allowable delay bound for synchronization is here found as $h_{\rm m} = 0.4190$ as described in Remark \ref{rem2} via the feasibility of the LMI (\ref{eq:lmi1cons}) for all $\lambda_j$'s. This is beacuse the sets $\Lambda_N$ and $V_M$ in Theorem \ref{th:delay} are the same for this  communication topology. Note that the delay bound found in \cite{Movric15jfi} by the LRF approach is $h_{\rm m}=0.1137$, while the `actual' delay bound found by the spectral domain analysis is $\tau_{\rm m} = 0.4445$ given also in \cite{Movric15jfi}. Thus, a significant improvement in reducing conservatism of the condition on the allowable delay bound is achieved. 
\begin{figure}%[!b]
\centering
\includegraphics[width=0.475\columnwidth]{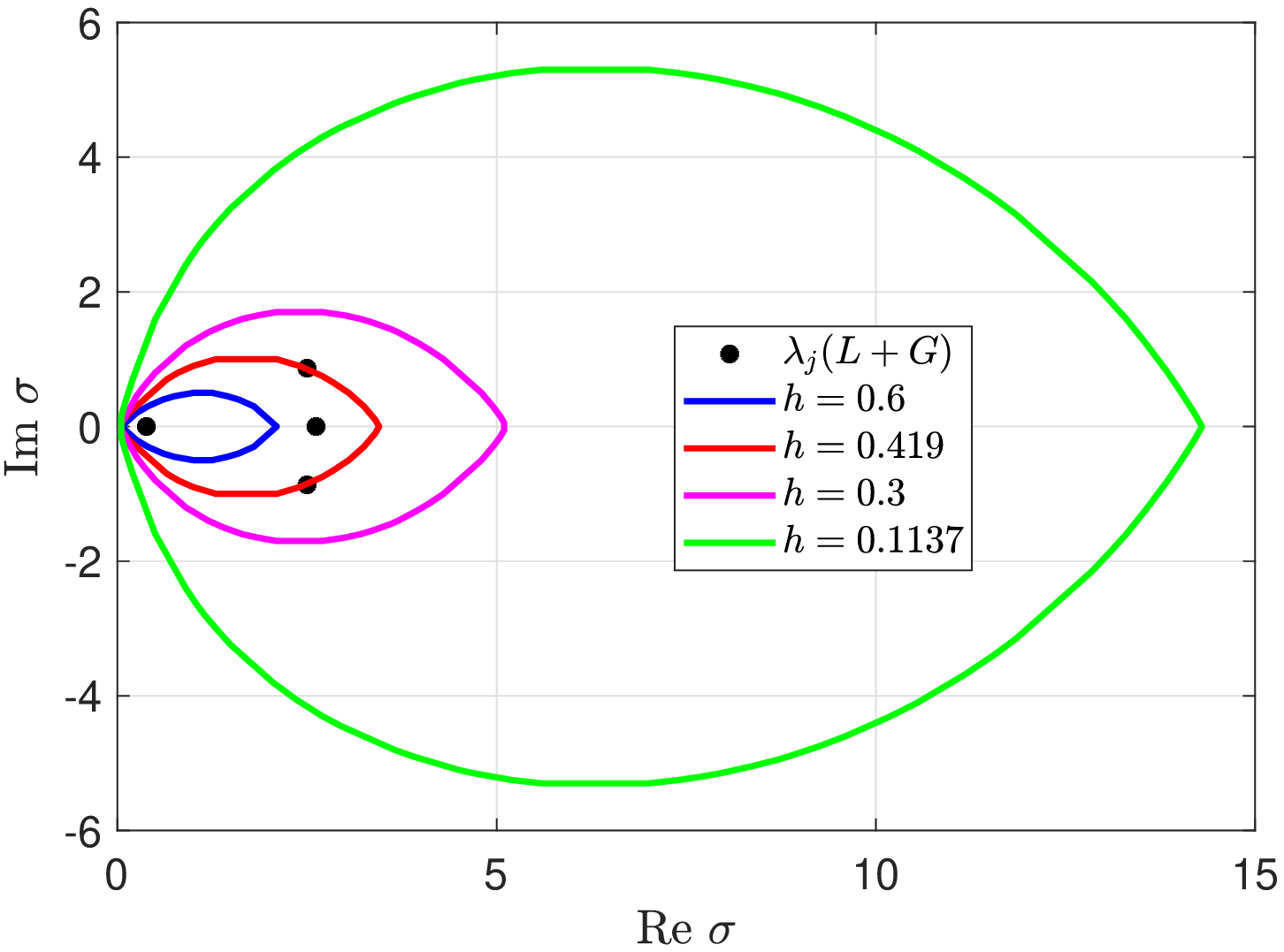}%
\includegraphics[width=0.475\columnwidth]{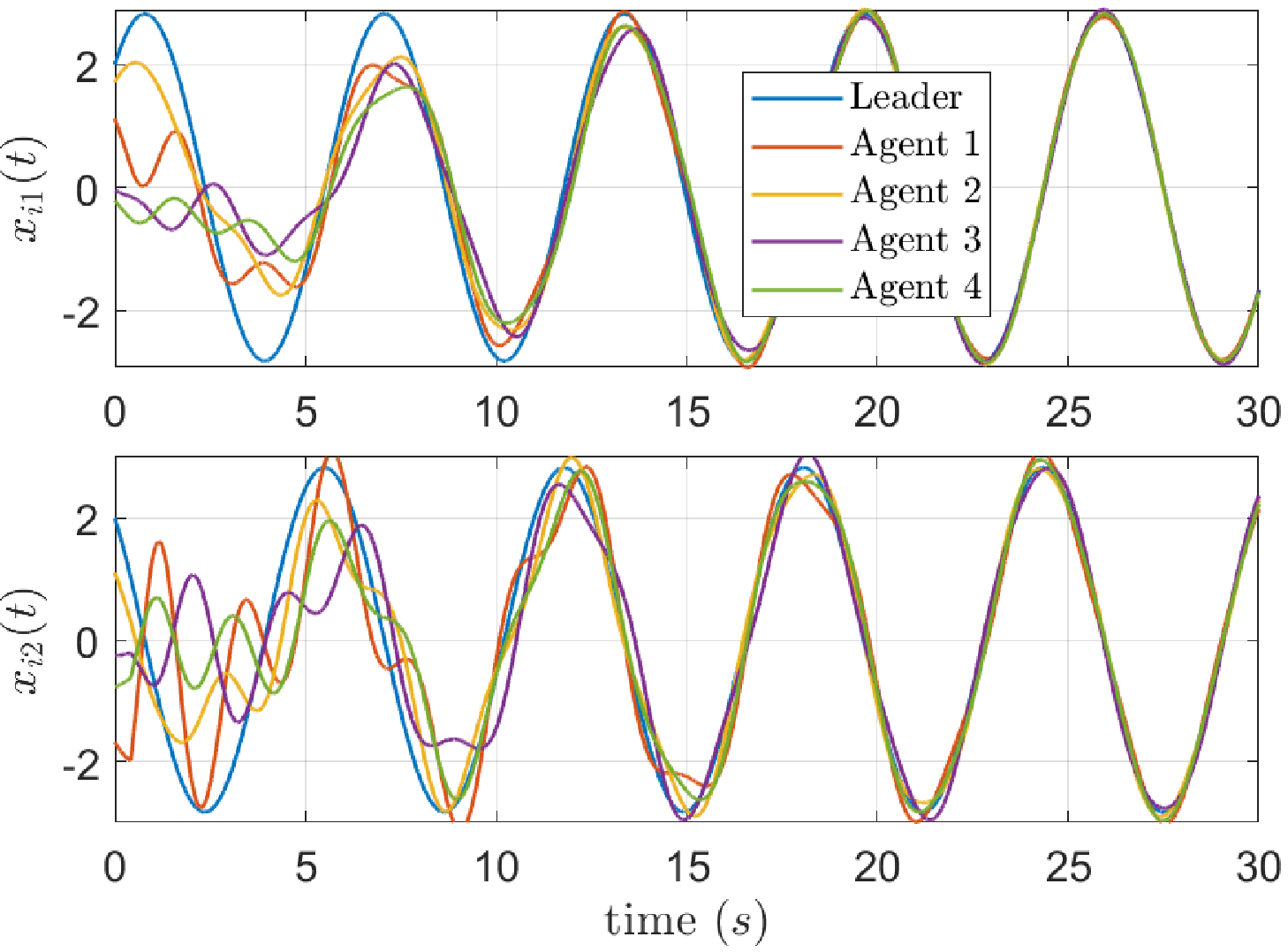}%
\vspace{-6pt}
\caption{Left: Synchronizing regions ($\Delta_M$) found by Corollary~\ref{cor1} for various values of the delay bound $h$ for the controller $K=1.34\;[0.1 \; 0.6403]$. Right: State variables of the leader and agents for $\tau = 0.419$ s.}%
\label{fig1}%
\end{figure}
%
%\begin{table}%[!b]
%\begin{center}
%\caption{Controllers designed by the proposed approaches in Section~\ref{sec:control} for the MAS synchronization.} \vspace{6pt}
%\label{tab1}\begin{tabular}{|p{2.28cm} || p{4.25cm} | p{4.25cm} |}
%\hline
 %Design method & $h=0.6 \quad (\epsilon=0.1) $ & $h=0.9 \quad (\epsilon=0.1) $ \\
%\hline\hline
%Theorem~\ref{th:control} & $K = [-0.0173  \;  0.2531]$ & no feasible solution \\
%\hline
%Procedure 1 &  $\bar K = [-0.005  \;  0.2883], \; c=1 $ & $\bar K = [-0.011 \; 0.1446], \; c = 1 $\\ 
 %& $ c_{min}=0.131, \; c_{max}=1.873 $ & $ c_{min}=0.131, \; c_{max}=1.024 $ \\
%\hline
%\end{tabular}
%\end{center}
%\end{table} 
%
\begin{table}%[!b]
\begin{center}
\caption{Controllers designed by the proposed approaches in Section~\ref{sec:control} for the MAS synchronization.} \vspace{6pt}
\label{tab1}\begin{tabular}{|p{2.5cm} || p{5.1cm} | p{5.1cm} |}
\hline
 Design method & $h=0.6 \quad (\epsilon=0.1) $ & $h=0.9 \quad (\epsilon=0.1) $ \\
\hline\hline
Theorem~\ref{th:control} & $K = [-0.0173  \;  0.2531]$ & no feasible solution \\
\hline
Procedure~1 &  $\bar K = [-0.005  \;  0.2883], \quad c=1 $ & $\bar K = [-0.011 \; 0.1446], \quad c = 1 $\\ 
 & $ c_{min}=0.131, \quad c_{max}=1.873 $ & $ c_{min}=0.131, \quad c_{max}=1.024 $ \\
\hline
\end{tabular}
\end{center}
\end{table} 

\begin{figure}[!t]
\centering
\includegraphics[width=0.47\columnwidth]{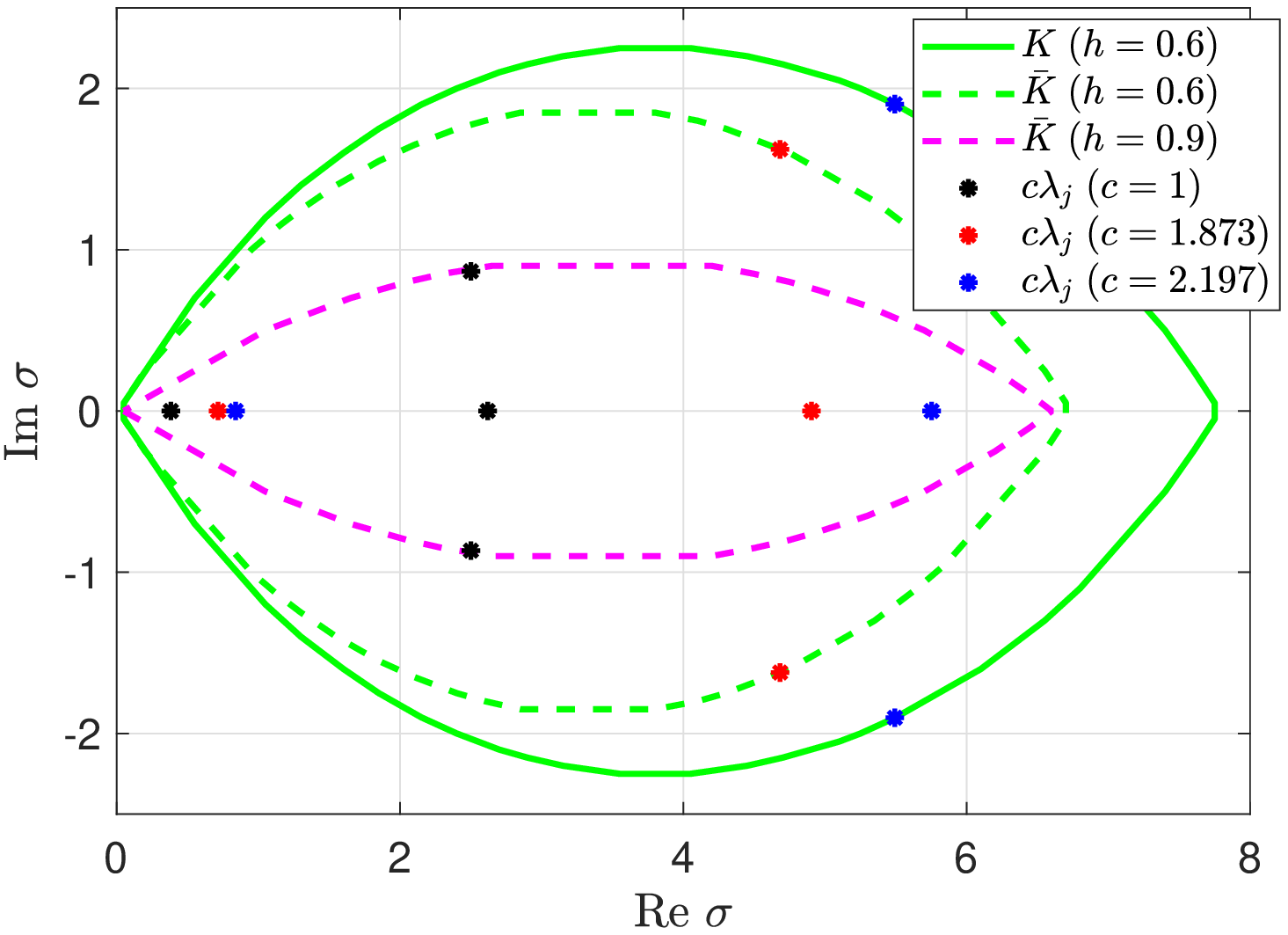}% synch_states_h06_Kclf
\includegraphics[width=0.47\columnwidth]{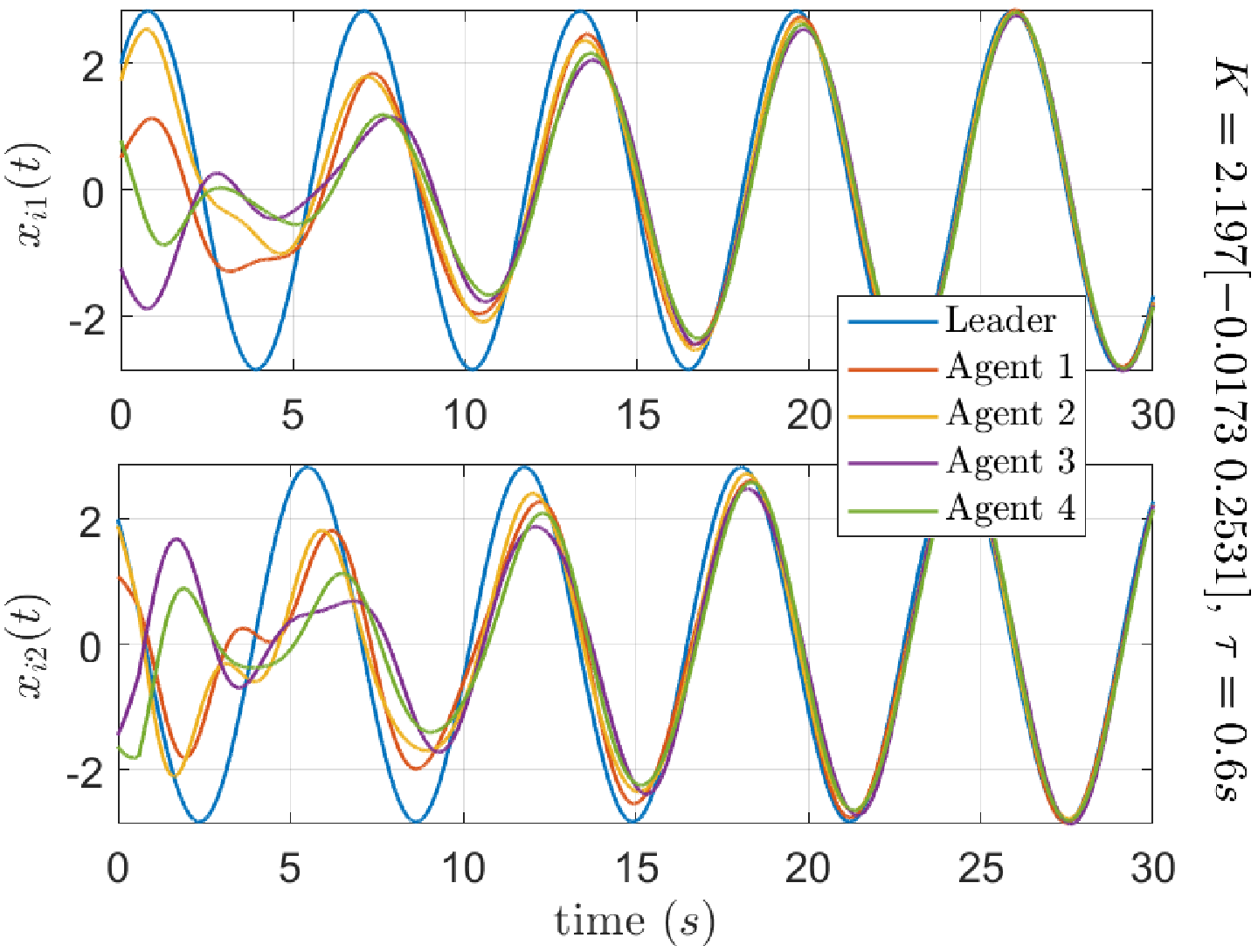}
\includegraphics[width=0.47\columnwidth]{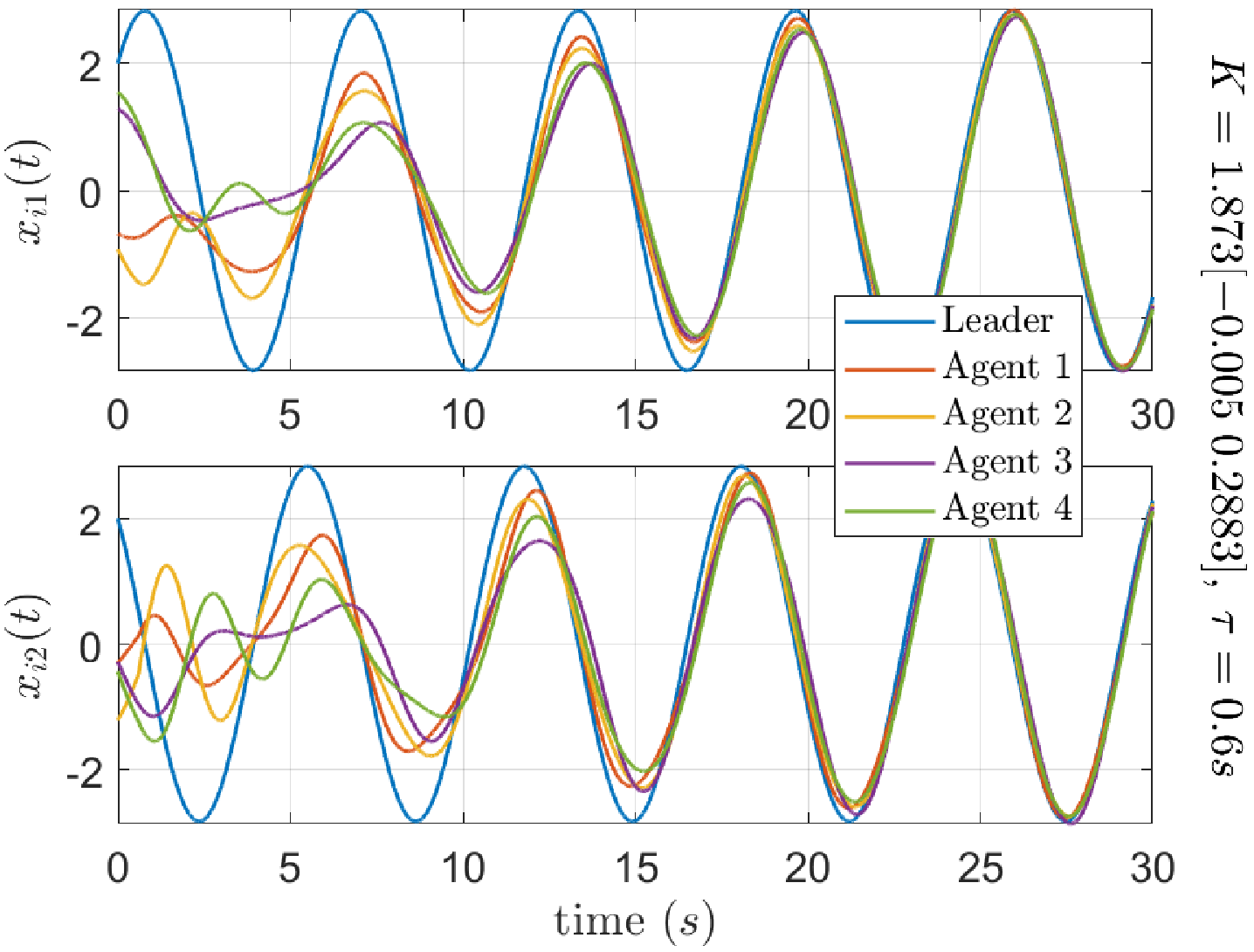}
\includegraphics[width=0.47\columnwidth]{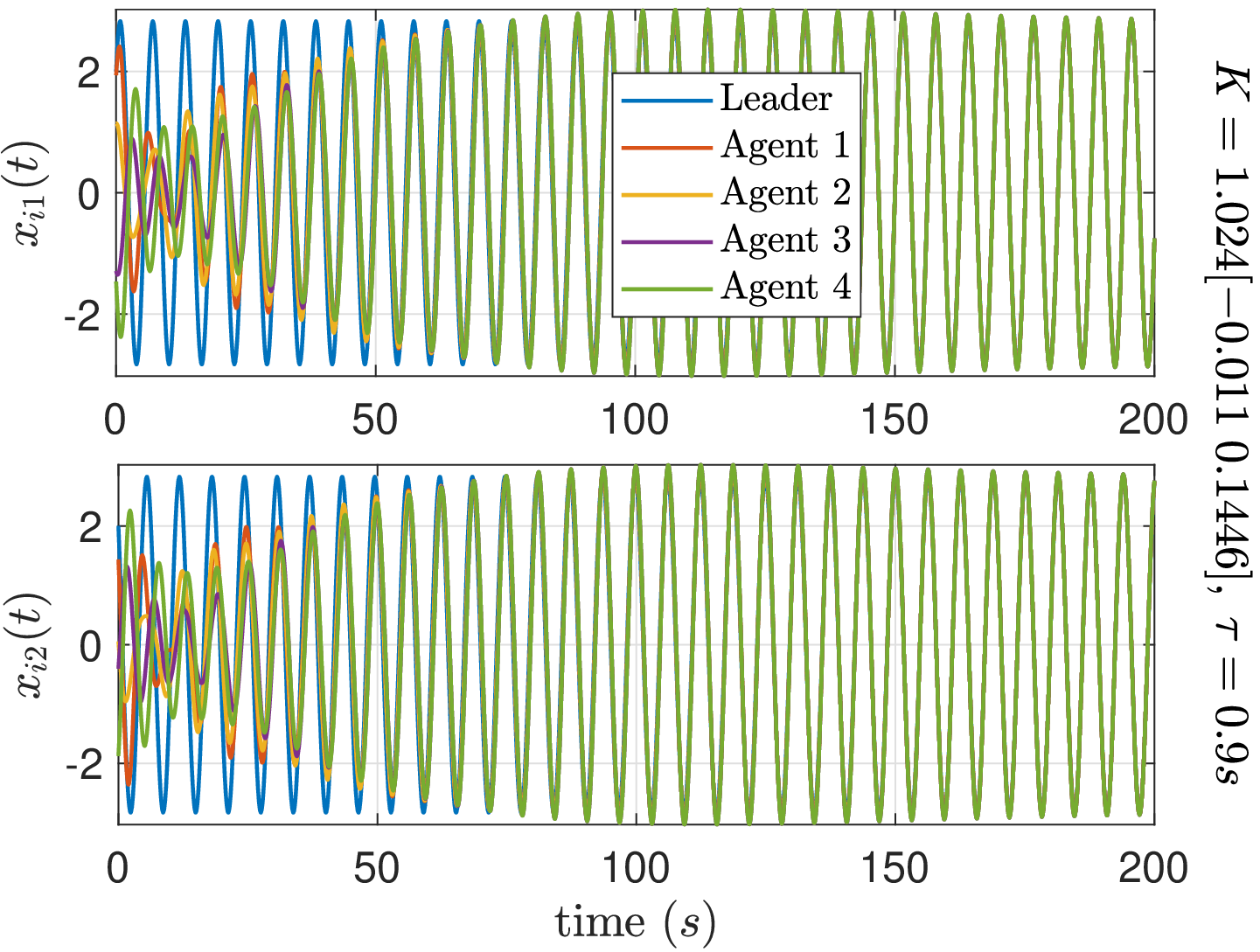}
\vspace{-9pt}
\caption{Upper-left: Synchronizing regions ($\Delta_M$) found by Corollary~\ref{cor1} for the control designs via Theorem~\ref{th:control} and Procedure~1 as in Table~\ref{tab1} for the upper delay bounds $h=0.6$ and $h=0.9$. Others: State variables of the leader and agents for $\tau = h$, and controllers amplified with $c_{max}$ given Table~\ref{tab1}.}%
\label{fig2}%
\end{figure}

We show in Fig.~\ref{fig1}--left the DSRs found by Corollary~\ref{cor1} with the resolution $\Delta = 0.05$ for the proposed algorithm when $K = 1.34[0.1 \; 0.6403]$ is chosen. Observe that $\lambda_{2,3}$ lie on the boundary of the region whereas $\lambda_{1,4}$ lies inside the region, found for the delay $h=0.419$, which is consistent with the presented result above found by Theorem \ref{th:delay}. 
%\ba{For a comparison, we illustrate the DSRs for $h=0.1137$ obtained by Corollary 2 (with a nonlinear matrix inequality) in \cite{Movric15jfi} and by the proposed Corollary \ref{cor1}}. 
For comparison, note that the DSR is determined by nonlinear matrix inequalities in \cite{Movric15jfi}, and is not easy to determine in its entirety. Note also that the delay bound $h=0.1137$ in \cite{Movric15jfi} is found for the condition $|\lambda_{max}|\leq 2.6457$. Thus, the conservatism of the DSR result is here reduced remarkably. Fig~\ref{fig1}--right also demonstrates the synchronization of the both state variables of agents to the states of the leader when $\tau = 0.419$ s in (\ref{eq:locerr}), where the initial state of the leader is $x_0(0) = [2 \; 2]^T$, and each initial state values of the agents are random values in the interval $[-2,2]$.

Let us also design the controllers proposed in Section~\ref{sec:control} for the delay bounds $h=0.6$ and $h=0.9$. The state-feedback controllers and coupling gains are obtained for both proposed methods as in Table~\ref{tab1}.  
The DSRs found by Corollary~\ref{cor1} and $c\lambda_j$ values with $c_{max}$ are illustrated in Fig~\ref{fig2}-upperleft. It is seen that all regions contain $\forall\lambda_j$ and $c_{max}\lambda_{2,3}$ lie on the boundary of $\Delta_M$. The largest DSR is found for $K$ by means of the common LKF approach, i.e. Theorem~\ref{th:control}, as mentioned in Remark \ref{rem5}. Correspondingly, $K$ can also be multiplied with any coupling gain $1 \leq c\leq 2.197$, where the synchronization is still guaranteed for $h=0.6$. The other sub-figures of Fig.~\ref{fig2} show that the synchronization of state variables of all the agents to the leader states is achieved for all controllers multiplied by maximum coupling constants.
% {\em{It would be good to demonstrate the results for a large scale MAS, especially to show that it is enough to consider just the vertex eigenvalues, or a convex-hull covering the eigenvalues..}}

All the LMI problems above are solved by standard SeDuMi solver with YALMIP toolbox. 
%
%%%%%%%%%%%%%%%%%%%%%%%%%%%
\section{Concluding remarks}\label{sec:conc}

Synchronization of MAS with constant uniform communication and control delays is investigated. Agents are described by general LTI dynamics, and assumed to interact on a directed graph. A unified LMI approach is proposed for determining the allowable delay bound, delay-dependent synchronizing region for any given state-feedback controller, and the design of distributed cooperative state-feedback control.~The proposed method is found to be less conservative both for the delay bound and synchronizing region estimates, as compared to other results existing in the literature. This is due to the LKF approach and the relaxation based on the revealed quasi-convexity property. For control design, a multiple LKF and a common LKF approach are proposed. While the multiple LKF results in a less conservative delay bound, the common LKF approach enables a more direct design and provides better robustness to graph uncertainties. For both approaches, it is sufficient to know only the upper bound on the delay and the approximate region in the complex plane where the Laplacian eigenvalues reside. Thus, those are both robust to delay and graph uncertainties. 

The proposed approach can also be adapted straightforwardly to the ARE-based design of state-feedback and output feedback controllers in \cite{Li2010,Zhang2011}. Also, it has high potential to reduce conservatism in various MAS synchronization control designs such as optimal and $H_\infty$ control studied in \cite{Hengster-Movric2014,Movric15aut}. Thus, we believe this paper will have some impact on the synchronization problem of MASs without delays as well. Another potential research area is the extension of the methodology to non-uniform constant or time-varying delays. Subsequent research on the subject will follow these directions.
\section*{Acknowledgments}
This work is supported by the European Regional Development Fund under the project Robotics for Industry 4.0 (reg. no. CZ.02.1.01/0.0/0.0/15\_003/0000470). The work of the second author is supported by The Czech Science Foundation project No. GA19-18424S.

% \section*{Bibliography}
% \bibliographystyle{abbrv}
\bibliographystyle{tfcad}
\bibliography{references_v2}

\end{document}